\providecommand{\tabularnewline}{\\}
\theoremstyle{remark}
\newtheorem{claim}{Claim}
\newcommand{\bd}{\text{Beta}}
\newcommand{\tht}{\theta}
\newcommand{\omg}{\omega}
\newcommand{\C}{\text{$\mathbb C$}}
\newcommand{\N}{\text{$\mathbb N$}}
\newcommand{\R}{\text{$\mathbb{R}$}}
\newcommand{\omgg}{\boldsymbol{\omg}}
\newcommand{\pp}{\boldsymbol{p}}
\newcommand{\qq}{\boldsymbol{q}}
\newcommand{\uu}{\boldsymbol{u}}
\newcommand{\vv}{\boldsymbol{v}}
\newcommand{\ww}{\boldsymbol{w}}
\newcommand{\xx}{\boldsymbol{x}}
\newcommand{\yy}{\boldsymbol{y}}
\begin{document}

\title{Exactly computing the tail of the Poisson-Binomial Distribution}
\author{Noah Peres, Andrew Lee and Uri Keich}
\maketitle

\begin{abstract}
We offer ShiftConvolvePoibin, a fast exact method to compute the tail of a Poisson-Binomial distribution (PBD).
Our method employs an exponential shift to retain its accuracy when computing a tail probability, and in practice
we find that it is immune to the significant relative errors that other methods, exact or approximate,
can suffer from when computing very small tail probabilities of the PBD.
The accompanying R package is also competitive with the fastest
implementations for computing the entire PBD. 
\end{abstract}

\section{Introduction}

Let $X_{i}$ for $i=1,\dots,N$ be independent Bernoulli random variables
(RVs) with corresponding probabilities of success $p_{i}$. The distribution
of $X=\sum_{1}^{N}X_{i}$ is called a Poisson-binomial distribution
(PBD) and it clearly generalizes the binomial distribution ($p_{i}\equiv p$
for all $i$).

Biscarri et al.~highlighted the considerable interest in using the
PBD in diverse areas of scientific research ranging from genetics
through survey sampling to sports analysis~\cite{Biscarri2018}.
Coinciding with this scientific interest was a continuing effort on
part of the statistical community to develop efficient and accurate
tools for evaluating the significance of tests based on the PBD. Most
of these tools, which range from approximation-based ones, including
Poisson and normal approximations, to a growing recent interest in
exact methods are also reviewed in~\cite{Biscarri2018}.

In this paper we focus on computing the tail probability of the PBD. Specifically
we look at evaluating $P\left(X\ge x\right)$ for a Poisson-binomial (PB) RV $X$
focusing on accurately computing this (right) tail probability when it is small.
Note that a left tail can be readily transformed into a right tail by considering
$X'=N-X=\sum_{1}^{N}\left(1-X_{i}\right)$ which is again a PB RV.

Recently Madsen et al.~pointed out that when using Hong's popular
DFT-CF~\cite{Hong2013} to compute the right tail of the PBD one
can expect extremely high \emph{relative errors} when the actual value
is smaller than about $10^{-16}$~\cite{Madsen2017}.\footnote{If the true value $s\ne0$ is computed as $\tilde{s}$ then the associated
\emph{relative error} is defined as $\left|\left(\tilde{s}-s\right)/s\right|$.} While many statisticians would not consider that a problem, and indeed
it is not a problem in the canonical 5\% significance cutoff scenario,
Madsen et al.'s motivation comes from bioinformatics research where
one often needs to accurately evaluate even much smaller tails. In
such cases relying on DFT-CF could lead to significant errors in the
downstream analysis.

As an alternative, Madsen et al.~develop a saddle-point based approximation
method which, as they demonstrate, has a much better-behaved relative
error when computing the p-value of large values of the statistic
$X$~\cite{Madsen2017}. Moreover, as Madsen et al.~argue, the
runtime complexity of their approach is $O\left(N\right)$ compared
with DFT-CF's complexity of $O\left(N^{2}\right)$ making their method
a seemingly win-win proposition for someone interested in evaluating
the right tail of the PBD.

In this paper we show that while the saddlepoint approximation method
of Madsen et al.~(SA) is generally quite accurate, there are cases
where it also suffers from significantly large errors. Particularly
problematic are cases where the actual p-value is close to 1 but SA
reports much smaller values (e.g., Figure \ref{fig:SP-problem} below).
Madsen et al.~acknowledged that SA ``is not suited for calculating
large (not significant) p-values ($>0.1$)'', however how would the
user know the p-value is $>0.1$ if SA reports a value that
is much closer to 0 (e.g., consider $s_{0}<4000$ in panel A of Figure \ref{fig:SP-problem}
below)?

Going back to the extremely large relative errors that DFT-CF can
induce, we show below that the same applies to DC-FFT, which is the
more recent method of Biscarri et al.~\cite{Biscarri2018}. Note
that both DFT-CF and DC-FFT are exact methods, that is, they compute
the probability mass function (pmf) of the PBD using the underlying
distribution rather than relying on an asymptotic approximation. So
how can the approximation-based SA be much more accurate than those
exact methods?!

Before explaining this we would like to clarify that both DFT-CF
and DC-FFT are accurate as long as you gauge their accuracy using
the total absolute error (TAE) as your figure of merit. Indeed, as
reported by both Hong and Biscarri et al., in that case these exact
methods live up to their name with the error rarely exceeding $10^{-10}$.
However, as noted above the relative error can tell a very different
story.

As explained, for example in \cite{Wilson2016b}, the source of the
extremely large relative errors is the Discrete Fourier Transform
(DFT, defined below), which both DFT-CF and DC-FFT rely on. Specifically,
the DFT involves summing many positive and negative numbers so the accumulated roundoff errors
are amplified by intermediate cancellations and potentially create extremely large relative errors
in the final result --- something that we observe with DFT-CF and DC-FFT.

In contrast, if we add up only non-negative numbers then the relative
error is well-controlled. In particular, the exact method of Direct
Convolution (DC) --- proposed by Biscarri et al. (Algorithm 1~\cite{Biscarri2018}),
as well as by Madsen et al., can be considered as the gold standard
in terms of accuracy. The downside of DC is that its runtime complexity
is $O\left(N^{2}\right)$ making it potentially forbiddingly slow
for large values of $N$. Indeed, DC-FFT was specifically designed
to be much faster than DC.

In this paper we introduce ShiftConvolvePoibin (or ShiftConvolve for short), a novel exact method with
a close-to-linear runtime complexity of $O\left(N\left(\log N\right)^{2}\right)$
and which, in practice, is on par with DC in terms of its control
of the relative error. ShiftConvolve accomplishes this by relying
on the same exponential shift idea that was originally employed in~\cite{Keich2005}
to control the numerical errors introduced by the DFT when computing
a certain tail probability (see also \cite{Wilson2016b}).

\section{Addressing the accuracy problem}

\subsection{DFT and the relative error problem}

The DFT operator, $D$, and its inverse, $D^{-1}$, are linear operators
defined on $\C^{n}$ as

\begin{equation}
\begin{aligned}(D\xx)(k) & \coloneqq\sum_{j=0}^{n-1}\xx(j)e^{-i2\pi kj/n}\qquad k=0,\dots,n-1\\
(D^{-1}\yy)(k) & \coloneqq\frac{1}{n}\sum_{j=0}^{n-1}\yy(j)e^{i2\pi kj/n}\qquad k=0,\dots,n-1.
\end{aligned}
\label{eq:DFT}
\end{equation}
It is clear from their definitions that computing the real and imaginary
components of the result generally involves adding up both positive
and negative numbers creating the significant relative errors we alluded
to.

In practice the DFT is almost invariably calculated by the Fast Fourier
Transform (FFT) which has a time complexity of $O\left(n\log n\right)$~\cite{Cooley:1965}.
Brisebarre et al.~provide an exhaustive overview of the analysis
of the error introduced by the FFT~\cite{Brisebarre2019} and it
should be stressed that in general a naive implementation of the DFT
would not be any more accurate than the FFT~\cite{Schatzman1996}.
Regardless, our goal here is not to bound the error, rather we follow
up on Madsen et al.~in pointing out the potentially catastrophic
effect the FFT can have on the relative error and to offer a solution
to this problem.

Both DFT-CF and DC-FFT rely on the DFT (and its inverse): the first
to invert the characteristic function of the PBD and the second to
perform its convolutions as explained below. In both cases the DFT
can severely compromise the relative accuracy of the computed values
as shown in panel A of Figure \ref{fig:DFT_error}: while the DFT-relying
methods coincide with the accurate DC for pmf entires that are larger
than $\approx10^{-16}$, values that are smaller cannot be recovered
by DFT-CF and DC-FFT (DFT-CF often returns 0 in this case as is evident
by the gaps in the green logarithmic curve). In particular, using
DFT-CF or DC-FFT to compute a right tail probability for large values
of $x$ will typically yield a result that is orders of magnitude off in terms of the
relative accuracy: the correct values as per DC are orders of magnitude
smaller than the values reported by DC-FFT, whereas for DFT-CF the
reported values vary between 0 (100\% relative error) and the same
order as DC-FFT's reported values.\footnote{Note that the absolute error is still small: no more than $\approx10^{-16}$
but our interest here is in accurately recovering the small values.}

Panel A of Figure \ref{fig:DFT_error} suggests that it is impossible
to recover the smaller entries of the pmf because of the numerical
errors inherent to the DFT. There is however a solution that was first
suggested in~\cite{Keich2005} and that uses an exponential shift
to overcome the numerical errors.

\begin{figure}
\centering %
\begin{tabular}{ll}
A. Original pmf, no shift & B. Original pmf as well as shifted (DC-FFT)\tabularnewline
\includegraphics[width=3in]{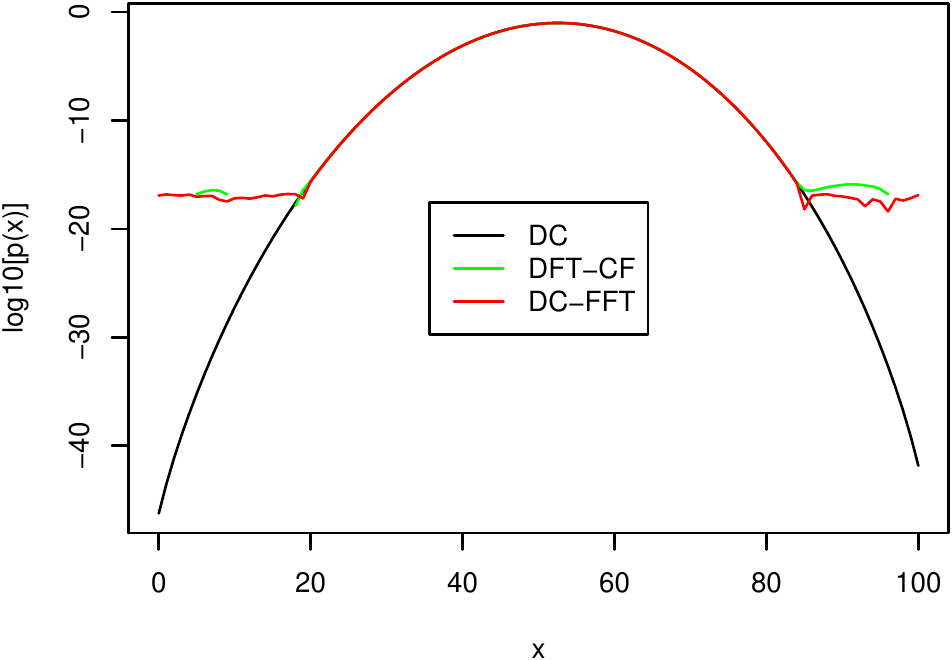} & \includegraphics[width=3in]{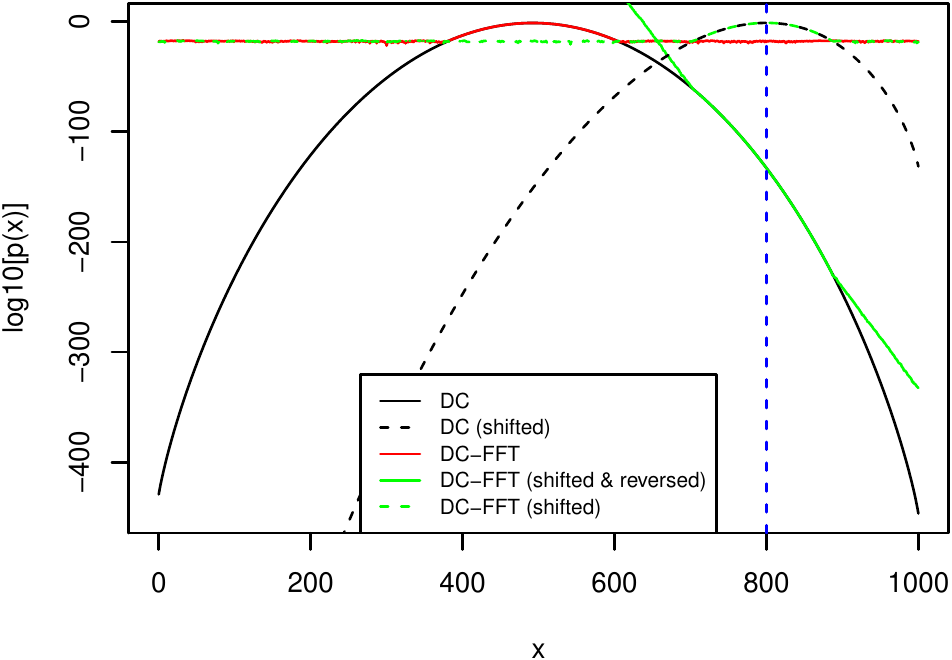}\tabularnewline
C. Relative error (DC-FFT / shifted) & D. Original pmf (DFT-CF / shifted)\tabularnewline
\includegraphics[width=3in]{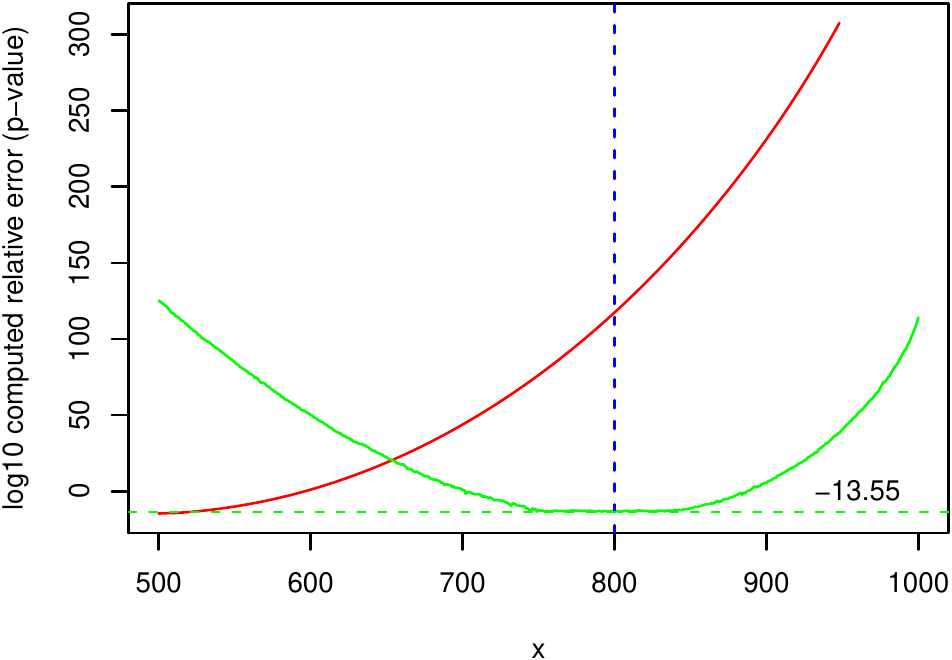} & \includegraphics[width=3in]{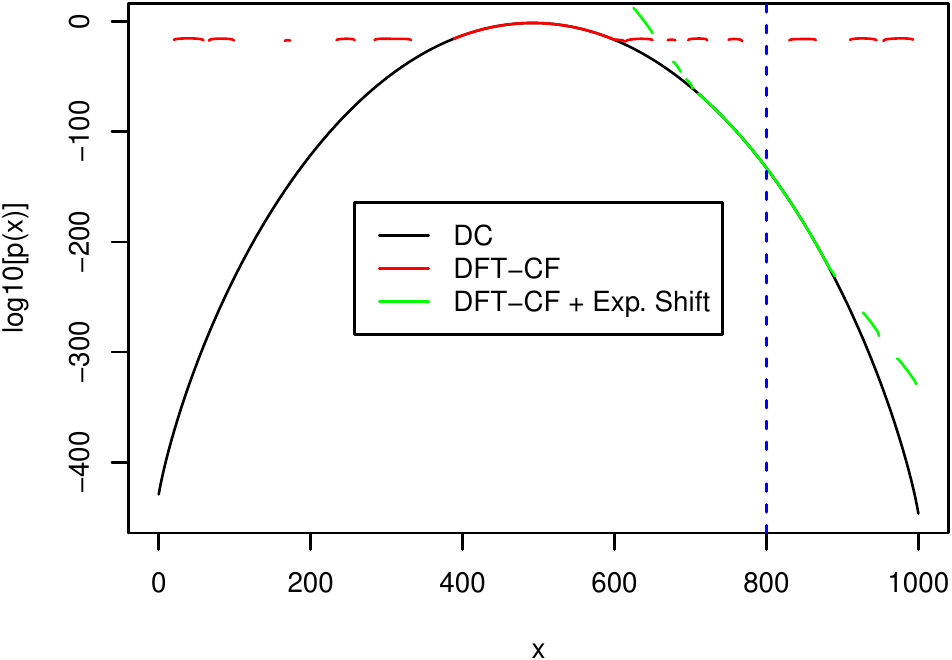}\tabularnewline
\end{tabular}\caption{\textbf{The accuracy problem. }(A) The (log of the) PB pmf computed by
the accurate DC is contrasted with the values computed by DFT-CF and
DC-FFT both of which rely on the DFT: values below $\approx10^{-16}$
cannot be recovered. Missing values of DFT-CF correspond to reported
0 values. The PBD here is defined using $N=100$ values of $p_{i}$ that were independently and uniformly
sampled. (B) The log of the PB pmf computed by DC-FFT compared with a shifted variant.
(C) The (log of the) relative error in computing the right tail using DC-FFT and its
shifted variant. (D)
Same as panel B but using DFT-CF and a shifted variant of DFT-CF.
Gaps correspond to reported 0 values. \label{fig:DFT_error}}
\end{figure}

\subsection{The exponential shift}

Let $\qq$ be a PB pmf supported on $0,1,\dots,n$ then the
exponentially shifted version of $\qq$ is defined as
\begin{equation}
\qq_{\theta}(k)=\qq(k)e^{k\theta}/M_{\qq}(\theta)\qquad k=0,1,\dots,n,\label{eq:shifted_pmf}
\end{equation}
where $M_{\qq}(\theta)$ is the moment generating function (MGF) of
$\qq$. Note that dividing by the MGF guarantees that $\qq_{\tht}$
is a proper pmf.

The basic idea behind the introduction of the exponential shift is
that the FFT-based methods are still accurate for the larger values
so all we need to do is make sure the values we are actually interested
in are ``large enough''.

The approach is visualized in panel B of
Figure \ref{fig:DFT_error}, which focuses on accurately recovering
$P\left(X\ge800\right)$. As seen by the red curves in panels B and
C of the figure, DC-FFT does not allow us to accurately gauge this
probability which is many order of magnitude smaller than the values
reported by DC-FFT.

However, when the proper exponential shift is applied to
the pmf (panel B, dashed black curve) the values that we are interested
in are sufficiently inflated so that DC-FFT accurately recovers the
section of the pmf about $x=800$ (panel B, dashed green curve).

Finally, reversing the shift we note that the combination
of DC-FFT with the shift and its reversal allows us to accurately
recover the tail probability for any $x$ in the neighborhood of 800:
the relative error is miniscule ($\approx10^{-13}$, panels B and
C, solid green curves). Panel D suggests that the same principle would
work for DFT-CF.

The reason we use an exponential shift rather than
some other arbitrary way to inflate the values we are interested in
is that the exponential shift can be readily applied,
as well as peeled off, or reversed in our context. Specifically:
\begin{claim}
The exponential shift commutes with the convolution operation ($\ast$),
that is, if $\pp$ and $\qq$ are pmfs defined on $0,1,\dots,n$ then
$\left(\pp\ast\qq\right)_{\tht}\equiv\pp_{\tht}\ast\qq_{\tht}$.
\end{claim}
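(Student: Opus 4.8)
The plan is to prove the identity by a direct computation that reduces both sides to the same expression, the only substantive ingredient being the multiplicativity of the moment generating function (MGF) under convolution. First I would record the relevant definitions explicitly: the convolution $(\pp\ast\qq)(k)=\sum_{j}\pp(j)\qq(k-j)$, and, from the definition \eqref{eq:shifted_pmf}, $\qq_{\tht}(k)=\qq(k)e^{k\tht}/M_{\qq}(\tht)$ with $M_{\qq}(\tht)=\sum_{k}\qq(k)e^{k\tht}$.

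The key preliminary step is to establish that the MGF factorizes under convolution, $M_{\pp\ast\qq}(\tht)=M_{\pp}(\tht)\,M_{\qq}(\tht)$. Probabilistically this is immediate, since convolution of pmfs corresponds to summing independent RVs and the MGF of such a sum is the product of the MGFs; but I would also give the one-line direct verification, writing $e^{k\tht}=e^{j\tht}e^{(k-j)\tht}$ inside $\sum_{k}\sum_{j}\pp(j)\qq(k-j)e^{k\tht}$ and reindexing $m=k-j$, which splits the double sum into the product $M_{\pp}(\tht)\,M_{\qq}(\tht)$.

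The main computation then expands the right-hand side,
\[
(\pp_{\tht}\ast\qq_{\tht})(k)=\sum_{j}\frac{\pp(j)e^{j\tht}}{M_{\pp}(\tht)}\,\frac{\qq(k-j)e^{(k-j)\tht}}{M_{\qq}(\tht)},
\]
and observes that the exponentials combine to a common factor $e^{k\tht}$ that pulls out of the sum, leaving $\frac{e^{k\tht}}{M_{\pp}(\tht)M_{\qq}(\tht)}\sum_{j}\pp(j)\qq(k-j)=\frac{e^{k\tht}}{M_{\pp}(\tht)M_{\qq}(\tht)}(\pp\ast\qq)(k)$. Substituting the factorization $M_{\pp}(\tht)M_{\qq}(\tht)=M_{\pp\ast\qq}(\tht)$ turns this into $(\pp\ast\qq)(k)e^{k\tht}/M_{\pp\ast\qq}(\tht)=(\pp\ast\qq)_{\tht}(k)$, which is exactly the left-hand side.

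I do not expect any genuine analytic obstacle: the statement is essentially a bookkeeping identity, and the two sides collapse to the same formula termwise in $k$. The only points that warrant a little care are (i) the support, since $\pp$ and $\qq$ are supported on $\{0,\dots,n\}$ their convolution is supported on $\{0,\dots,2n\}$, so the shift in the claim is to be read as applied on that larger range; and (ii) keeping the reindexing of the convolution sum consistent so that the factor $e^{(k-j)\tht}$ is correctly paired with $\qq(k-j)$. Once the MGF factorization is in hand, the result follows immediately.
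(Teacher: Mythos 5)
Your proposal is correct and follows essentially the same route as the paper's proof: a direct termwise computation of the convolution combined with the multiplicativity of the MGF, $M_{\pp\ast\qq}=M_{\pp}\cdot M_{\qq}$ (which the paper justifies probabilistically, exactly as you do, merely omitting your optional reindexing verification). Your expansion from $\pp_{\tht}\ast\qq_{\tht}$ back to $(\pp\ast\qq)_{\tht}$ is just the paper's chain of equalities read in the opposite direction, and your remark about the support $\{0,\dots,2n\}$ matches the paper's range $k=0,\dots,2n$.
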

\begin{proof}
Recall that if $X$ and $Y$ are independent $\N$-valued RVs with corresponding pmfs $\pp$ and $\qq$ then
\[
M_{\pp\ast\qq} \equiv M_{X+Y} \equiv M_X \cdot M_Y \equiv M_{\pp} \cdot M_{\qq} .
\]	
With this in mind for $k=0,\dots,2n$ we have
\[
\begin{aligned}\left(\pp\ast\qq\right)_{\tht}\left(k\right) & =\left(\pp\ast\qq\right)\left(k\right)\cdot e^{k\tht}/M_{\pp\ast\qq}\left(\tht\right)\\
 & =\sum_{i=0}^{k}\left(\pp(i)\cdot e^{i\tht}\right)\left(\qq\left(k-i\right)\cdot e^{\left(k-i\right)\tht}\right)/\left(M_{\pp}(\theta)\cdot M_{\qq}(\theta)\right)\\
 & =\left(\pp_{\tht}\ast\qq_{\tht}\right)\left(k\right).
\end{aligned}
\]
\end{proof}
It follows that the same holds for convolutions of any order and hence
that we can compute $\qq_{\tht}$, the $\tht$-shifted version of
the PB pmf $\qq$, by convolving the $\tht$-shifted versions of each
of the Bernoulli pmfs. The latter of course can be trivially
found because a $\tht$-shifted Bernoulli($p$) pmf is again a Bernoulli
pmf only with probability of success $p_{\tht}=pe^{\tht}/\left(1-p+pe^{\tht}\right)$.
Note that, in particular, an exponentially shifted PBD is also a PBD.

It follows that we can use DFT-CF and DC-FFT to compute the exponentially
shifted pmf $\pp_{\tht}$ of the PBD by applying them to $\tht$-shifted
Bernoullis. Of course, recovering $\qq$ from $\qq_{\tht}$ is a trivial
exercise of inverting \eqref{eq:shifted_pmf}, or equivalently applying
a shift of $-\tht$ to $\qq_{\tht}$. This is indeed the procedure
we applied when obtaining the green curves of panels B and D of Figure
\ref{fig:DFT_error}.

What is missing at this point is the protocol for determining $\tht$
given $x$. Here again we follow \cite{Keich2005}, and we define the
shift as the value of $\tht$ that minimizes the expression $\log M_{\qq}\left(\tht\right)-\tht\cdot x$.
The critical point of the latter function is the value $\tht$ for
which $M_{\qq}'\left(\tht\right)/M_{\qq}\left(\tht\right)=x$ but
$M_{\qq}'\left(\tht\right)/M_{\qq}\left(\tht\right)$ is simply the
expected value of the shifted $\qq_{\tht}$. Hence, finding $\tht$
amounts to solving
\begin{equation}
x=E\left(X_{\tht}\right)=\sum_{i=1}^{N}E\left(X_{i,\tht}\right)=\sum_{i=1}^{N}\frac{p_{i}e^{\theta}}{1-p_{i}+p_{i}e^{\theta}},\label{eq:tht}
\end{equation}
 where $X_{\tht}=\sum_{1}^{N}X_{i,\tht}$ is a $\qq_{\tht}$ distributed
RV and $X_{i,\tht}$ are independent Bernoulli($\pp_{\tht}(i)=p_ie^{\tht}/\left(1-p_i+p_ie^{\tht}\right)$)
RVs. In practice we solve \eqref{eq:tht} numerically using the \texttt{uniroot}
function in R.

\section{ShiftConvolvePoibin}

Combining our exponential shift protocol with either DFT-CF or DC-FFT
produces an exact method that is able to practically recover any right
tail probability (p-value) with a negligible relative error. The combined
general procedure is outlined in Algorithm~\ref{alg:shift-compute-PB-pmf-unshift}.

\begin{algorithm}
	\SetAlgoLined
	\DontPrintSemicolon
	\KwIn{$\pp = (p_0, p_1, ... ,p_{N-1})$ vector of success probabilities, observed value $s_0$}
	\tcc*{Lines 1-5: apply the exponential shift to $\pp$}
	\textbf{Function: }$\mu(\theta, \pp)$ $\longleftarrow $ sum$\left(\left[p_i\exp(\theta)/(1-p_i + p_i\exp(\theta)) \textbf{ for } p_i \in \pp\right]\right)$\;
	$\theta^* \longleftarrow$ find root $\theta$ such that $\mu(\theta, \pp) - s_0 = 0$\;
	$\pp_\tht \longleftarrow	$ empty list\;
	\For{$p_i \in \pp$}{
		$\pp_\tht(i)^+ \longleftarrow p_i \exp(\theta^*)/(1 - p_i + p_i\exp(\theta^*))$\;}
	$\qq_\tht\longleftarrow$ apply DC-FFT or DFT-CF to $\pp_\tht$\;\tcc*{Lines 7-10: reverse the exponential shift}
	$\qq \longleftarrow$ empty list\;
	$M \longleftarrow$ prod([$(1-p_i + p_i \exp(\theta^*))$ \textbf{for} $p_i \in \pp$])\;
	\For{$j \in 0,1,\dots,N-1$}{
		$\qq(j) \longleftarrow \qq_\tht(j) \exp(-j\cdot\theta)\cdot M$\;}
	\KwOut{$\qq$: the convolved pmf}
\caption{shift-compute-PB-pmf-unshift}
\label{alg:shift-compute-PB-pmf-unshift}
\end{algorithm}

Consider the variant of Algorithm~\ref{alg:shift-compute-PB-pmf-unshift}
where the pmf of the (shifted) PBD is computed with DC-FFT with its default
setting of $M=N$. In this case, starting with the $N$ (shifted) Bernoulli
pmfs at each step DC-FFT pairs all intermediate (shifted) PB pmfs
and convolves each pair while passing the resulting PB pmf to the
next step until at the last step it ends up with a single PB pmf
(Algorithm~\eqref{alg:DC-FFT-M-N}).

Each such pairwise convolution
is executed using the FFT based on the following identity. Suppose
that the pmfs $\pp$ and $\qq$ are supported on $\{0,1,\dotsc,m\}$
and $\{0,\dotsc,n\}$ respectively and for $Q\ge m+n+1$ embed $\pp$
and $\qq$ in $\R^{Q}$ by extending them with $Q-m$, respectively
$Q-n$, zeros. Then using the $Q$-dimensional DFT operator and its
inverse we have~\cite{Stockham1966}
\begin{equation}
\pp\ast\qq=D^{-1}(D\pp\odot D\qq),\label{eq:DFT-conv}
\end{equation}
where for $Q$-dimensional vectors $\uu$ and $\vv$, $(\uu\odot\vv)(i)=\uu(i)\vv(i)$.

\begin{algorithm}
	\SetAlgoLined
	\DontPrintSemicolon
	\KwIn{$p = (p_0, p_1, ... ,p_{N-1})$ vector of success probabilities}
	$V \longleftarrow$ empty list\;
	\For{$p_i \in p$}{
		add FFT($(1-p_i, p_i,0,0)$) to list $V$\;
	}
	$L$ $\longleftarrow$ length($V$)\;
	\While{$L > 2$}{
		$V^* \longleftarrow$ empty list\;
		\If{$L$ \textnormal{is odd}}{
			$n \longleftarrow$ length($\vv$ in $V$)\;
			add the $n-$dimensional vector $(1,1,...,1)$ to list $V$\;
		}
		split $V$ into pairs $(\vv_i, \vv_j)$\;
		\For{\textnormal{each pair }$(\vv_i, \vv_j)$ in $V$}{
			$\uu \longleftarrow$ FFTInverse(PointwiseMultiply($\vv_i, \vv_j$))\;
			$\uu \longleftarrow$ pad $\uu$ with length($\uu$) 0s\;
			$\vv^* \longleftarrow$ FFT($\uu$)\;
			add $\vv^*$ to list $V^*$\;
		}
		$V \longleftarrow V^*$\;
		$L \longleftarrow$ length($V$)\;
	}
	$\uu \longleftarrow$ FFTInverse(PointwiseMultiply($V[0],V[1]$))\;
	\KwOut{$\uu$}
\caption{Pair-aggregated-FFT-convolution (DC-FFT with $M=N$)}
\label{alg:DC-FFT-M-N}
\end{algorithm}

At first glance DC-FFT ($M=N$) seems wasteful because each intermediate
pmf is computed by applying $D^{-1}$ while at the next step $D$
is applied to the same pmf (lines 13 and 15). Of course, these two operators have different
dimensions (the latter's twice the former's) so this is not as wasteful
as it might initially look. Still, we next show how some work can
be saved by making a more efficient use of what has already been computed
in the previous step.

For $\uu\in\C^{n}$ let $\uu^{*}\in\C^{2n}$ be the zero-padded $2n$-dimensional
version of $\uu$: $\uu^{*}(i)=\uu(i)$ for $i=0,\dots,n-1$ and $\uu^{*}(i)=0$
for $i=n,\dots,2n-1$. Then for $k=0,\dots,n-1$
\begin{equation}
(D_{2n}\uu^{*})(2k)=\sum_{j=0}^{2n-1}\uu^{*}(j)e^{-i2\pi\left(2k\right)j/\left(2n\right)}=\sum_{j=0}^{n-1}\uu(j)e^{-i2\pi kj/n}=(D_{n}\uu)(k).\label{eq:even_2n}
\end{equation}
 That is, the even entries of $D_{2n}\uu^{*}$ coincide with $D_{n}\uu$
and therefore we do not need to recompute them.

As for the odd entries of $D_{2n}\uu^{*}$, for $k=0,\dots,n-1$
\begin{equation}
(D_{2n}\uu^{*})(2k+1)=\sum_{j=0}^{2n-1}\uu^{*}(j)e^{-i2\pi\left(2k+1\right)j/\left(2n\right)}=\sum_{j=0}^{n-1}\left(\uu(j)e^{-i\pi j/n}\right)e^{-i2\pi kj/n}=(D_{n}\left(\uu\odot\omgg\right))(k),\label{eq:odd_2n}
\end{equation}
where $\omgg(j)=e^{-i\pi j/n}$ for $j=0,\dots,n-1$ and $\odot$ again
is the coordinate-wise product. So while these odd entries do not
come for free they can be computed using an $n$-dimensional DFT rather
than a $2n$-dimensional one.

Algorithm~\ref{alg:more-efficient} takes advantage of the last two identities to speed up DC-FFT ($M=N$),
or Algorithm~\eqref{alg:DC-FFT-M-N}, by about 50\% for large $N$
(more on that in Section \ref{subsec:Complexity} below).

\begin{algorithm}
	\SetAlgoLined
	\DontPrintSemicolon
	\KwIn{$\pp = (p_0, p_1, ... ,p_{N-1})$ vector of success probabilities}
	$V \longleftarrow$ empty list\;
	\For{$p_i \in \pp$}{
		add FFT($(1-p_i, p_i,0,0)$) to list $V$\;
	}
	$L$ $\longleftarrow$ length($V$)\;
	\While{$L > 2$}{
		$V^* \longleftarrow$ empty list\;
		$n \longleftarrow$ length($\vv$ in $V$)\;
		$\omgg \longleftarrow$ [exp$(ij\pi/n)$ \textbf{for} $j \in$ [$0,1,...,n-1$]\;
		\If{$L$ \textnormal{is odd}}{
			add the $n$-dimensional vector $(1,1,...,1)$ to list $V$\;
		}
		split $V$ into pairs $(\vv_i, \vv_j)$\;
		\For{\textnormal{each pair }$(\vv_i, \vv_j)$ in $V$}{
				$\uu \longleftarrow$ FFTInverse(PointwiseMultiply($\vv_i, \vv_j$))\;
				$\ww \longleftarrow$ FFT(PointwiseMultiply($\uu, \omgg$))\;
				\tcc*{assign the components of $\vv$ and $\ww$ to even and odd components of $\vv^*$ respectively:}
				$\vv^* \longleftarrow$ ($\vv[0],\ww[0],\vv[1],\ww[1],...,\vv[n-1],\ww[n-1]$)\;
				add $\vv^*$ to list $V^*$\;
			}
			$V \longleftarrow V^*$\;
			$L \longleftarrow$ length($V$)\;
		}
	$\uu \longleftarrow$ FFTInverse(PointwiseMultiply($V[0],V[1]$))\;
	\KwOut{$\uu$}
\caption{Frugal-pair-aggregated-FFT-convolution (FPA-FFTC)}
\label{alg:more-efficient}
\end{algorithm}

Our ShiftConvolvePoibin algorithm (Algorithm \ref{alg:ShiftConvolve}) combines
Algorithm~\ref{alg:shift-compute-PB-pmf-unshift} with the latter,
more efficient, Algorithm~\ref{alg:more-efficient}.
Note that due to its built-in exponential shift ShiftConvolve can return the accurate logarithm
of the right tail probability even when the actual number can create an underflow or 0 when not using logs.

We have two implementations of ShiftConvolve where in both cases the critical part of the code is written in C and is wrapped in an R package.
The two versions differ in which code they use to execute the FFT: one version relies on FFTW~\cite{FFTW05} and requires
the user to install the FFTW package whereas the other uses minFFT~\cite{Mukhin2019} and is self-contained.
In both cases ShiftConvolve saves some runtime by taking advantage of the fact that the FFTInverse operation on line 13
of Algorithm~\ref{alg:more-efficient} should produce a real-valued vector.

\begin{algorithm}
	\SetAlgoLined
	\DontPrintSemicolon
	\KwIn{$\pp = (p_0, p_1, ... ,p_{N-1})$ vector of success probabilities, observed value $s_0$}
	Apply steps 1-5 of Algorithm \ref{alg:shift-compute-PB-pmf-unshift} to get the shifted $\pp_\tht$\;
	$\qq_\tht\longleftarrow$ FPA-FFTC($\pp_\tht$) \tcc*{apply Algorithm \ref{alg:more-efficient} to $\pp_\tht$}
	Apply steps 7-10 of Algorithm \ref{alg:shift-compute-PB-pmf-unshift} to get the unshifted $\qq$\;
	\KwOut{$\qq$: the convolved pmf}
\caption{ShiftConvolvePoibin}
	\label{alg:ShiftConvolve}
\end{algorithm}

To conserve runtime and accuracy in practice ShiftConvolve treats the special degenerate cases $p_i \in\{0,1\}$ differently.
That is, before any convolutions are done on the input vector, each $p_i=0$ is discarded (these correspond to guaranteed failures,
or adding the constant random variable $0$) and each $p_i=1$ is also discarded, but represents a (+1) shift in index for the final pmf
(guaranteed successes, or adding the constant random variable $1$).

Finally, the ShiftConvolvePoibin package offers the user the option of forgoing any exponential shift if one wants to compute the entire pmf
rather than a tail.

\section{Comparative analysis}

In this section we look at the performance of the exact methods DFT-CF,
DC-FFT and ShiftConvolvePoibin, as well as the approximation method SA.
We start with the analysis of the accuracy of the computed right tail
probability.

\subsection{Accuracy}

We first look only at SA and ShiftConvolve. Panel B of Figure \ref{fig:SP-problem}
confirms that SA offers a good relative accuracy for most of the range
of values $s_{0}$ for which the tail probability is small (panel
A), however, for extremely large values of $s_{0}$ the relative accuracy
is compromised. More alarming is the fact that for values of $s_{0}<4000$
SA consistently and inaccurately reports very small tail probabilities
instead of values close to 1. In addition, SA occasionally reports
probabilities that are larger than 1.

\begin{figure}
\centering %
\begin{tabular}{ll}
A. & B.\tabularnewline
\includegraphics[width=3in]{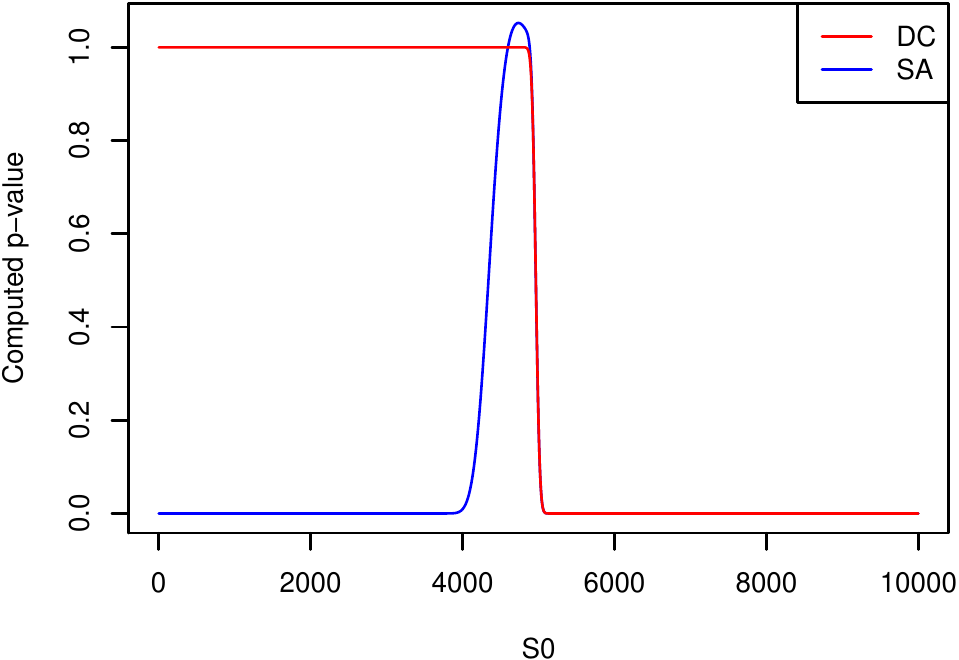} & \includegraphics[width=3in]{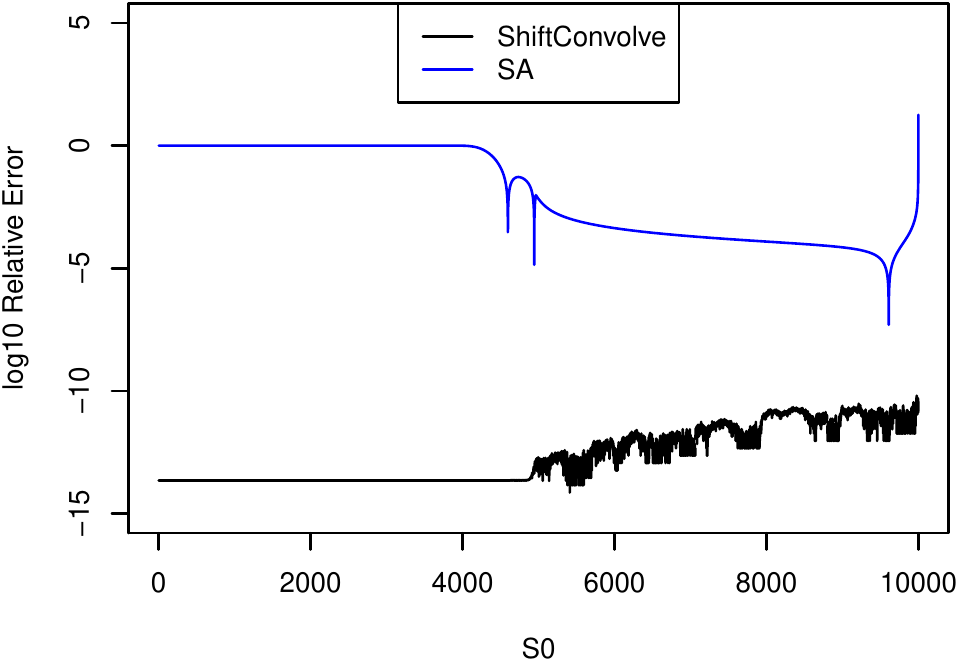}\tabularnewline
\end{tabular}\caption{\textbf{Failures of SA.} The panels highlight regions where SA fails
to accurately compute the right tail probability. Here, $N = 10000$ and the $p_i$ were sampled uniformly from $[0,1]$. (A) The computed right tail probability
(p-value) reported by SA and the accurate DC. (B) The corresponding
(log base 10 of) the relative error where we added for reference the
significantly smaller relative error of ShiftConvolve. The performance
of SA is of particular concern for $s_{0}\le4000$ as it reports p-values
that are smaller than 0.1 giving the user no indication that it might
be off from the correct tail probability which is close to 1. \label{fig:SP-problem}}
\end{figure}

Figures \ref{fig:right-tails} and \ref{fig:right-tail-error} as
well as Supplementary Figures \ref{fig:right-tail-2} and \ref{fig:right-tail-error-2}
confirm what we noted above: DFT-CF and DC-FFT cannot recover entries
that are smaller than approximately $10^{-16}$. SA's performance in
those figures is consistent with Figure \ref{fig:SP-problem} analyzed
above: its accuracy is compromised as we are close to the maximal
possible value and it is significantly off when the p-value is close
to 1. In contrast, ShiftConvolve retains very good accuracy (10 accurate
digits or more) throughout the entire range of values.

We also note that for the more skewed Beta(3, 0.1) distribution with higher proportions of 1s present (due to roundoff errors),
the SA algorithm failed to run, returning an error (as in panel C of Supplementary Figure \ref{fig:right-tail-2}, where the blue curve is absent).
Similarly, DC-FFT breaks when $N$ is very large ($N\ge5.5\cdot10^5$).

\begin{figure}
\centering %
\begin{tabular}{ll}
A. & B.\tabularnewline
\includegraphics[width=3in]{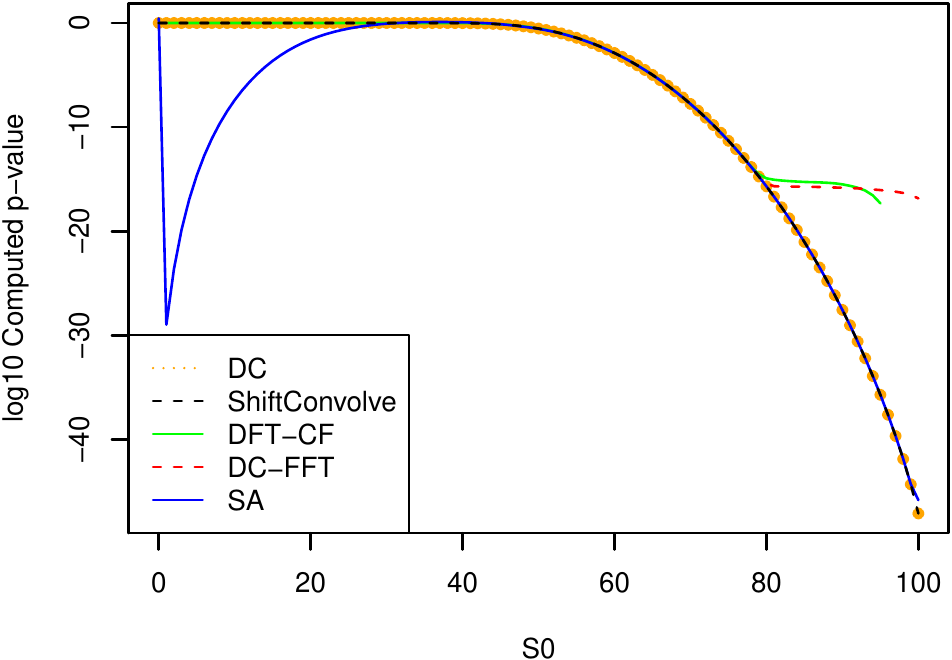} & \includegraphics[width=3in]{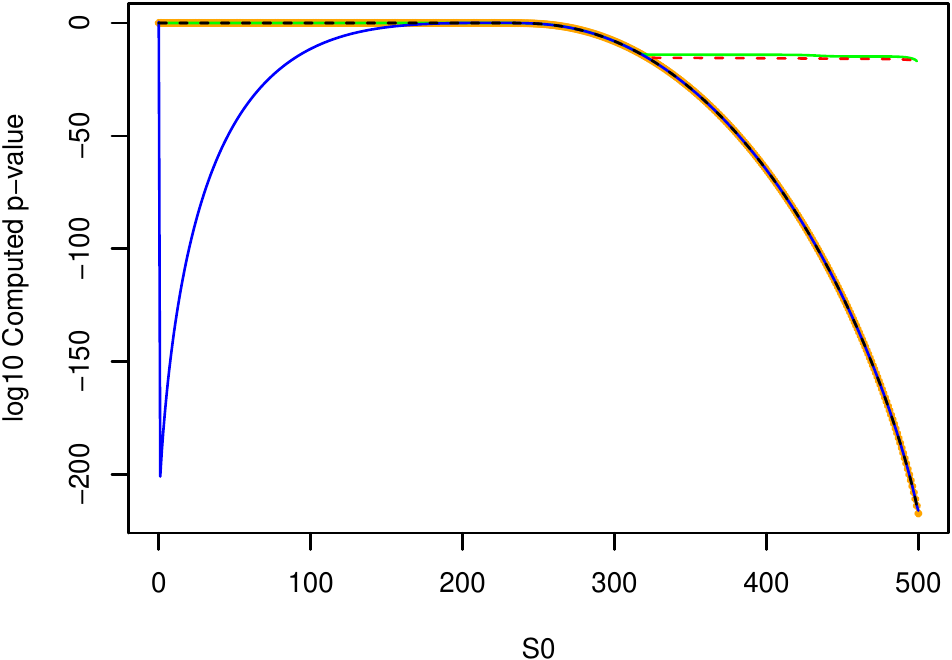}\tabularnewline
C. & D.\tabularnewline
\includegraphics[width=3in]{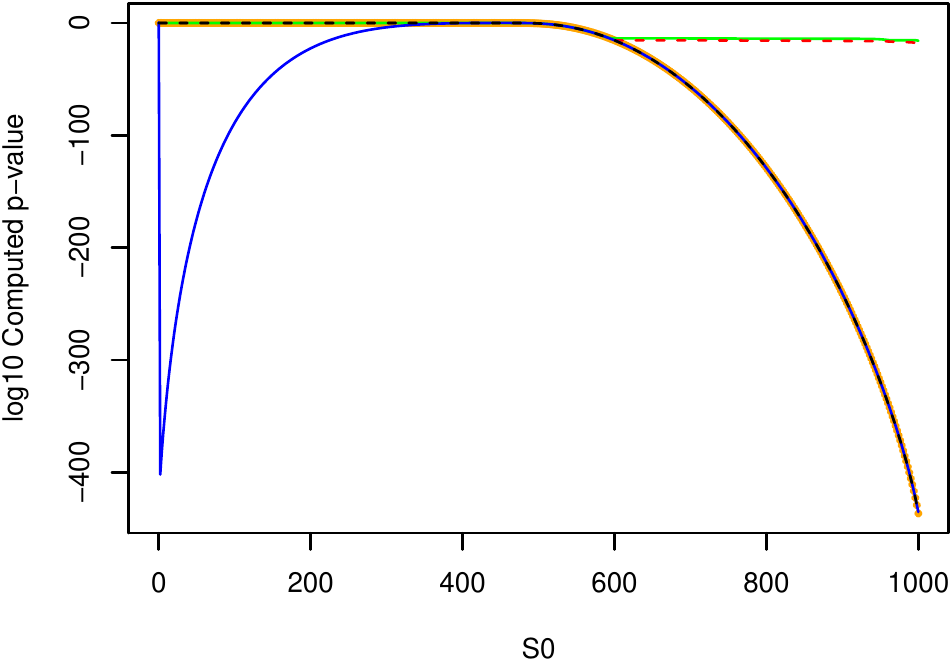} & \includegraphics[width=3in]{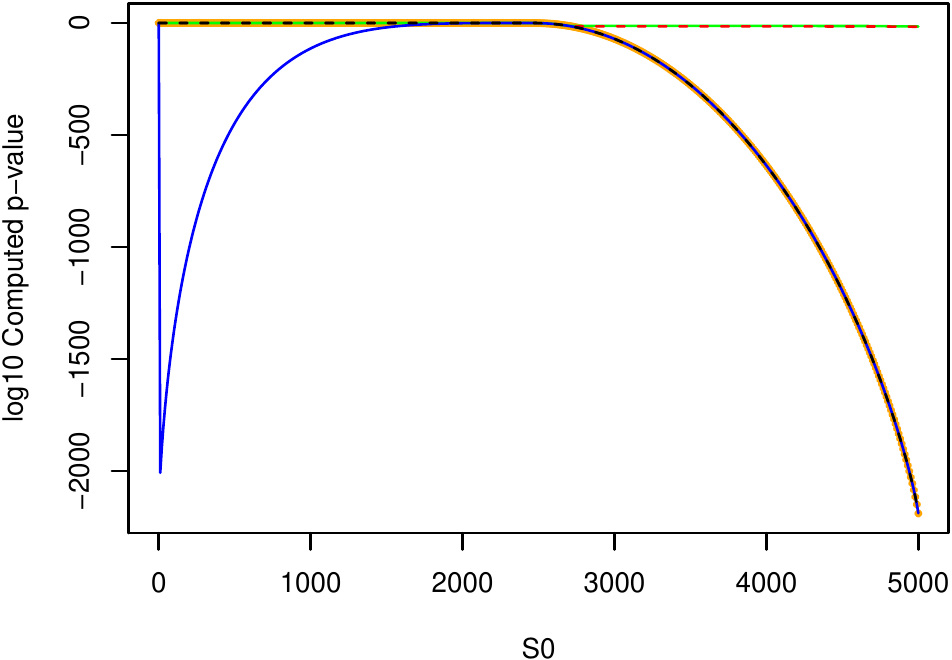}\tabularnewline
E. & F.\tabularnewline
\includegraphics[width=3in]{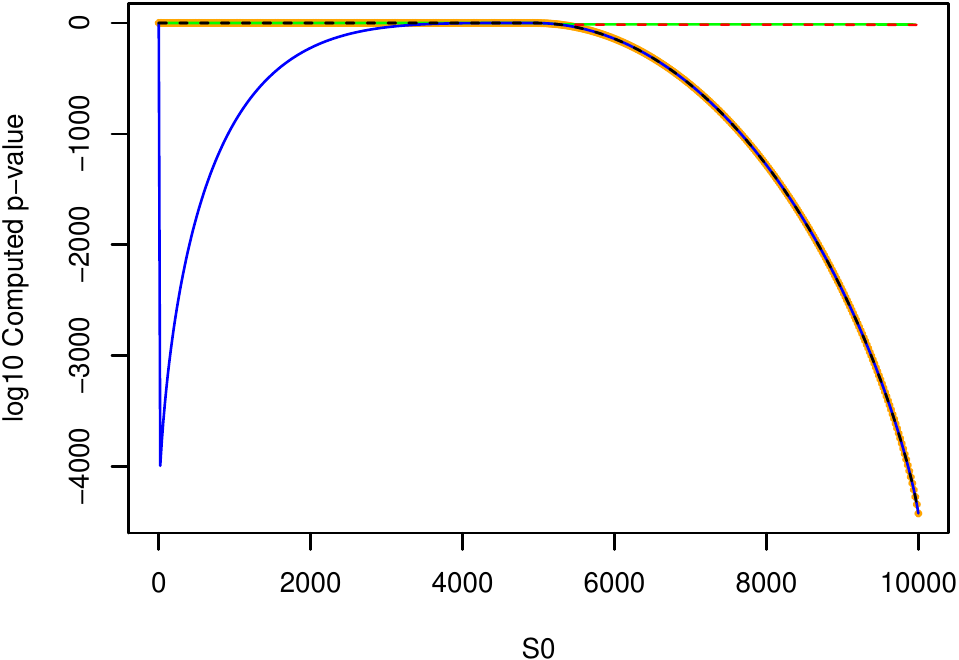} & \includegraphics[width=3in]{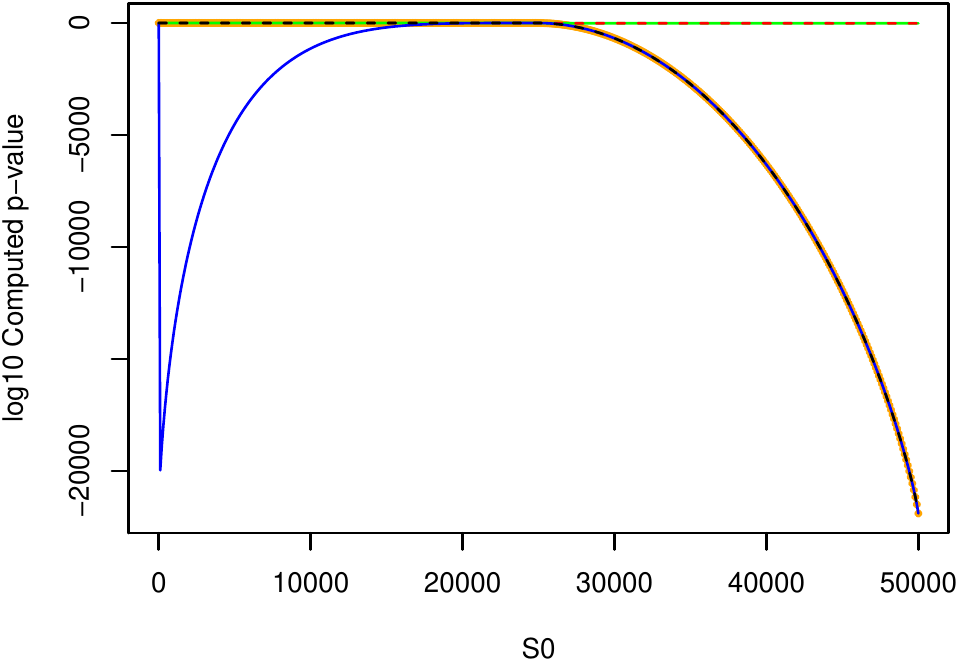}\tabularnewline
\end{tabular}\caption{\textbf{Right tail probability.} The panels show the (log of the)
computed right tail probability (p-value) for varying settings of
the parameters. For the corresponding relative errors see Figure \ref{fig:right-tail-error}.
(A) $N=100$, $p_i$ sampled uniformly. DFT-CF does not always
extend across the entire range of possible values of the statistic
because it reports 0s (log is $-\infty$ in some cases). (B) $N=500$, $p_i$ sampled uniformly. (C) $N=1000$, $p_i$ sampled uniformly.
(D) $N=5000$, $p_i$ sampled uniformly. (E) $N=10000$, $p_i$ sampled uniformly. (F) $N=50000$, $p_i$ sampled uniformly. \label{fig:right-tails}}
\end{figure}

\begin{figure}
\centering %
\begin{tabular}{ll}
A.  & B.\tabularnewline
\includegraphics[width=3in]{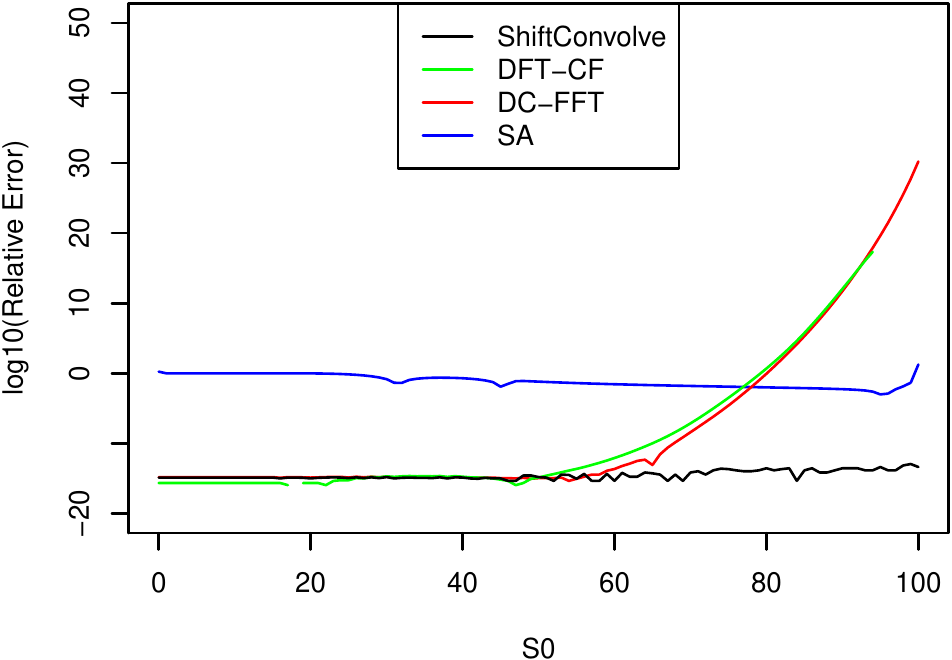} & \includegraphics[width=3in]{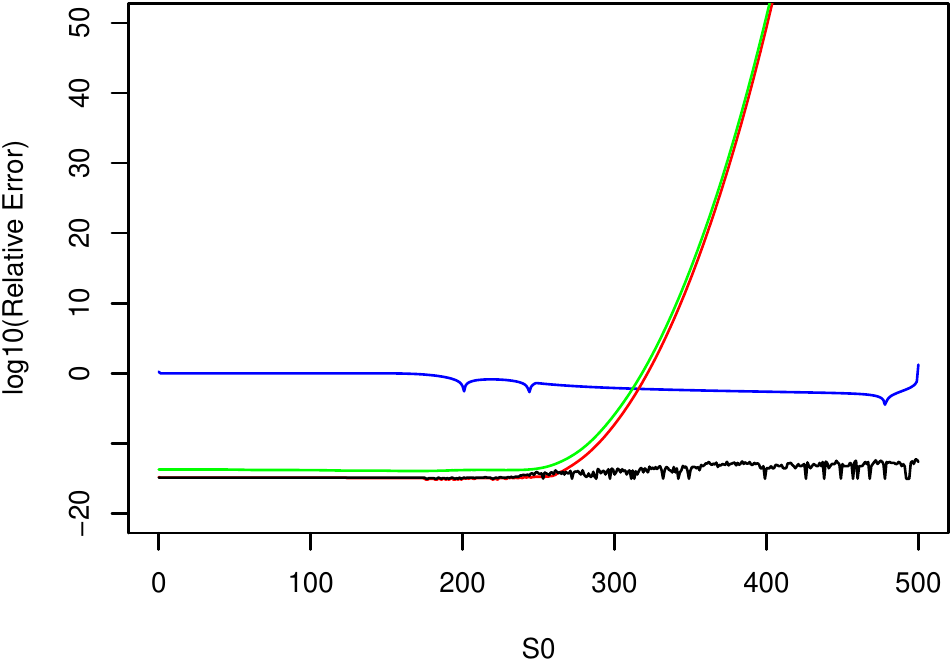}\tabularnewline
C. & D.\tabularnewline
\includegraphics[width=3in]{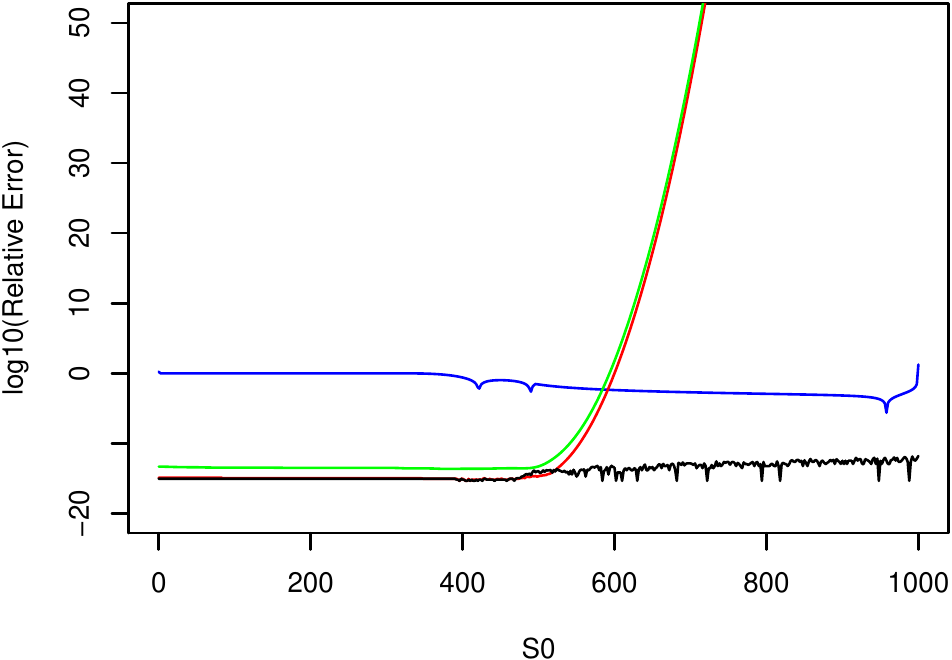} & \includegraphics[width=3in]{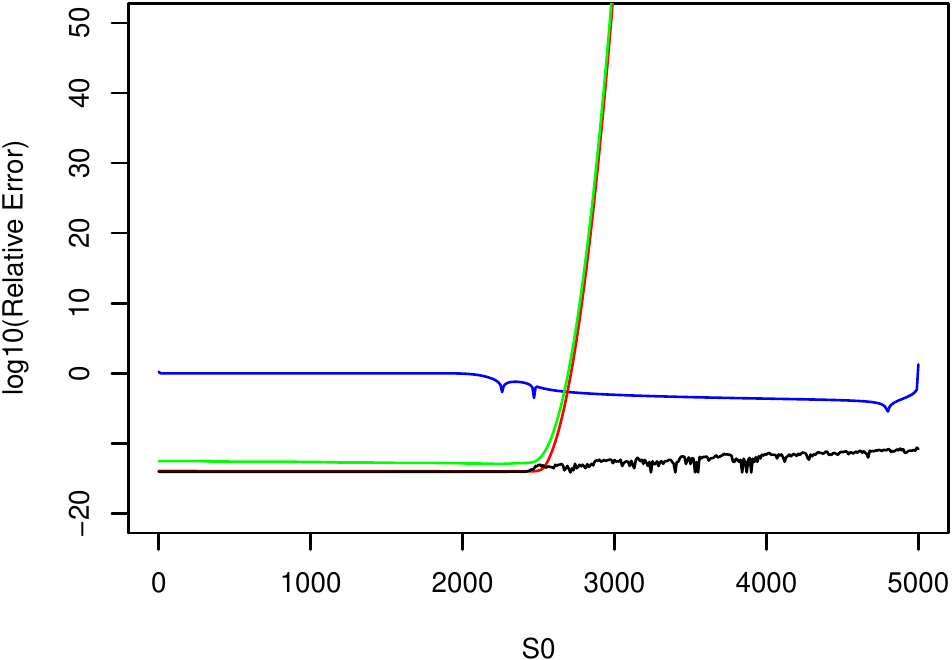}\tabularnewline
E. & F.\tabularnewline
\includegraphics[width=3in]{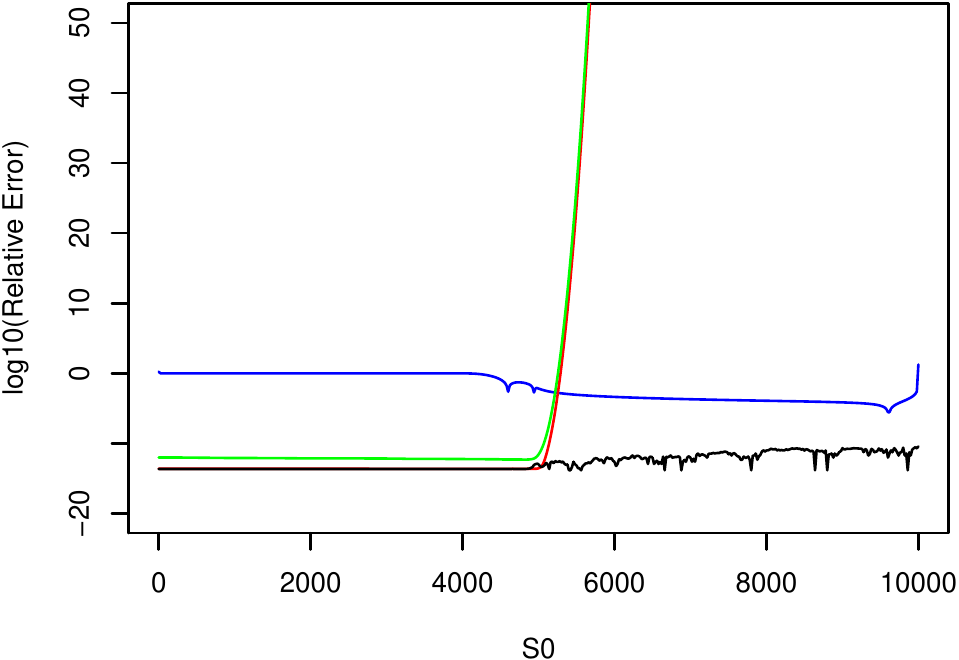} & \includegraphics[width=3in]{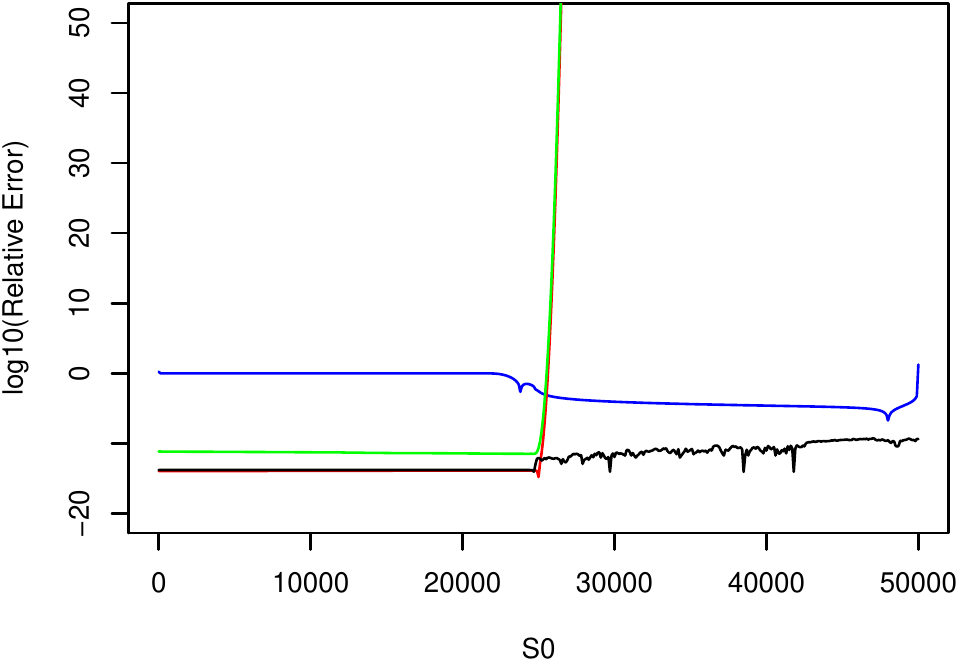}\tabularnewline
\end{tabular}\caption{\textbf{Gauging the accuracy.} The panels show the (log of the) relative
error (with DC taken as the gold standard) in computing the right tail probability for varying settings
of the parameters. See Figure \ref{fig:right-tails} for the right
tail probability itself. (A) $N=100$, $p_i$ sampled uniformly. (B) $N=500$, $p_i$ sampled uniformly. (C) $N=1000$, $p_i$ sampled uniformly.
(D) $N=5000$, $p_i$ sampled uniformly. (E) $N=10000$, $p_i$ sampled uniformly. (F) $N=50000$, $p_i$ sampled uniformly. \label{fig:right-tail-error}}
\end{figure}

\subsection{Complexity / runtime\label{subsec:Complexity}}

The memory and runtime complexities of an algorithm are important
practical considerations. In our case the memory complexity of all
the procedures is linear, or $O(N)$, but their runtimes differ considerably.

Starting with the approximation methods, they all require computing
at least a couple of moments of the distribution hence they are typically
$O(N)$, which is clearly the case for the normal approximation, as
well as the refined normal approximation (RNA) mentioned in \cite{Hong2013},
but it also applies to the more computationally-intensive SA. The
computationally-demanding part of SA is finding the desired exponential
shift $\tht$ but in practice this requires evaluating the MGF at
only a small number of candidate values of $\tht$ so the number of
operations of this step is still $O(N)$. This is borne out empirically
in panel B of Figure \ref{fig:timing-all} where the execution time of SA follows a line of slope 1 in log-log scale.

As for the exact methods, the runtime complexity of DC is $O\left(N^{2}\right)$:
there are $N$ steps (the outer loop of Algorithm 1 in \cite{Biscarri2018})
and the number of operations in the $k$th step is $O(k)$. Moving
on to DFT-CF, as noted by Madsen et al., its runtime complexity is
also $O\left(N^{2}\right)$ because this is the complexity of the
part of the procedure that computes the characteristic function (CF)
and inverting it is done using FFT in $O\left(N\log N\right)$.
Again, these quadratic complexities can be observed in panel B of Figure \ref{fig:timing-all}
where the execution times of DFT-CF and DC (we have both a version that works in log-space as well as the default version)
follow a line of slope 2 in log-log scale.

\begin{figure}
\centering %
\begin{tabular}{ll}
A. & B.\tabularnewline
\includegraphics[width=3in]{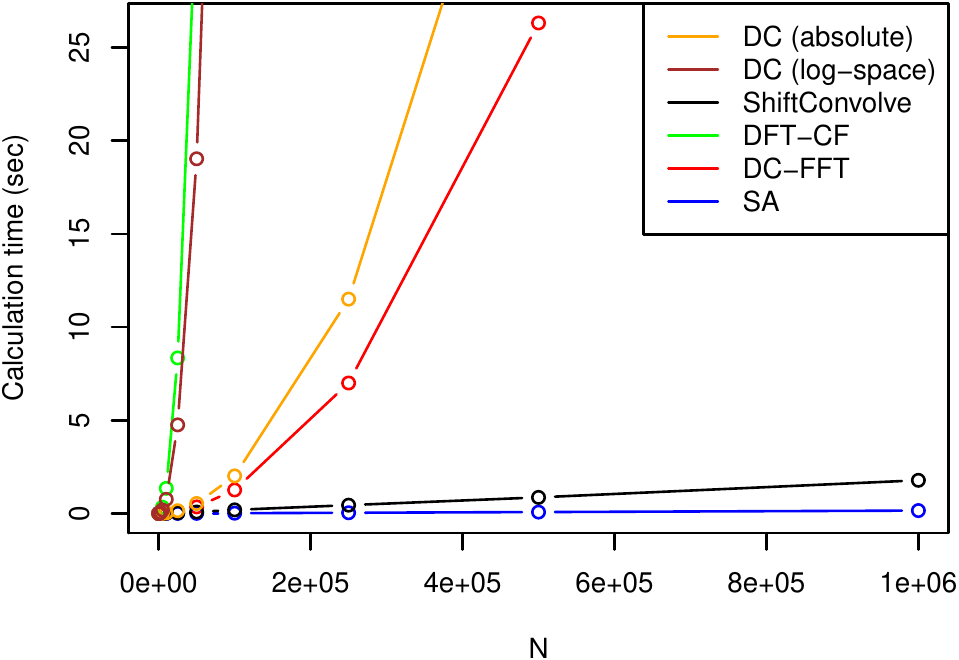} & \includegraphics[width=3in]{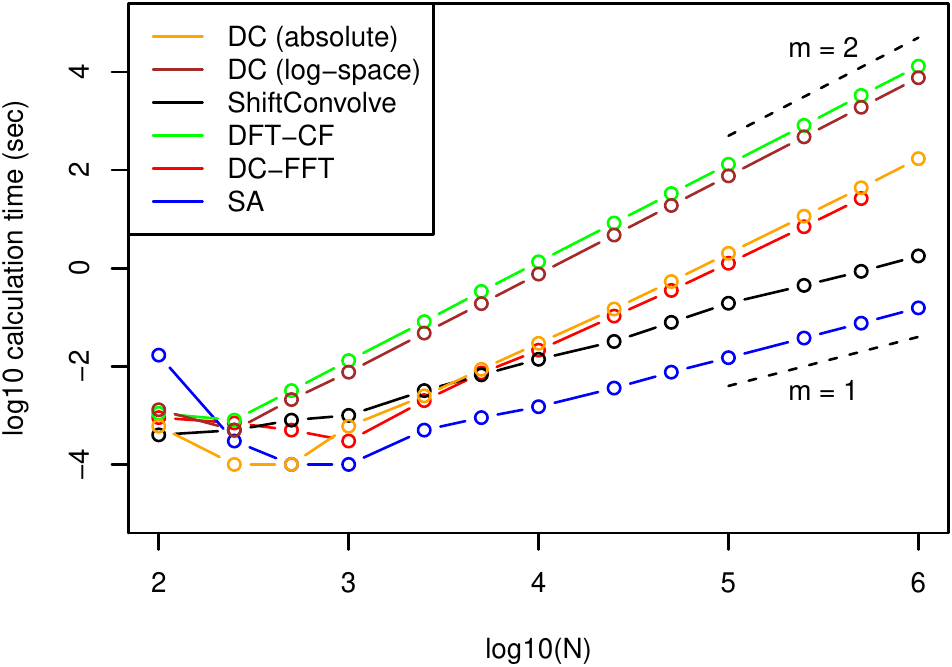}\tabularnewline
\end{tabular}\caption{\textbf{Runtime comparison.} The panels yield the runtime of the various
procedures we consider (panel A is in natural scale whereas panel B is in log-log scale).
DC-FFT was run in its default setting of $M=2$ (the missing point for $N=10^6$ is due to the its failure on that input size).
Each data point represents the average runtime over 10 applications
of the procedure each using an independently and uniformly drawn $N$-dimensional
vector of Bernoulli success probabilities (different vector for each of the 10 applications).
DC (absolute) is Algorithm 1 of Biscarri et al.~and DC (log-space) is a version of DC that is part of the ShiftConvolvePoibin package
and that works with logs to extend DC's dynamic range.
In panel B we added short line segments of slopes $m=1$ and $m=2$ for reference. \label{fig:timing-all}}
\end{figure}

Biscarri et al.~empirically demonstrated that DC-FFT is significantly
faster than DFT-CF but did not offer a complexity analysis of DC-FFT
and in fact its runtime complexity varies with the parameter $M$:
with $M=N$ the runtime is the same $O\left(N\left(\log N\right)^{2}\right)$
as ShiftConvolve's (see our analysis below) but with its default setting
of $M=2$, or any other constant, it is actually $O\left(N^{2}\right)$ (as can be observed in Figure \ref{fig:timing-all}).
Biscarri et al.~also suggest a heuristic of choosing $M=\max\left\{ 2,2^{\left[\log_{2}\left(N/750\right)\right]}\right\} $,
where $\left[x\right]$ is the round-to-nearest-integer function which
again has the same asymptotic complexity of $O\left(N\left(\log N\right)^{2}\right)$
as when $M=N$, although in practice it can be much faster for any
given $N$ (Supplementary Figure \ref{fig:runtime-Biscarri}).

The runtime complexity of our ShiftConvolve is $O\left(N\left(\log N\right)^{2}\right)$.
Indeed, assuming that $N=2^{K}$ there are $K$ steps each of which consists of
using FFT to convolve pairs of equal-length vectors: at the $k$th
step there are $N/2^{k}$ pairwise convolutions of $2^{k}$-dimensional
vectors. Hence the number of operations in the $k$th step is $O\left(N/2^{k}\cdot2^{k}\log2^{k}\right)=O\left(N\cdot k\right)$
and summing over $k=1,\dots,K$ we get the stated $O\left(N\cdot K^{2}\right)$.

Note that the modifications that we made to speed up Algorithm~\ref{alg:DC-FFT-M-N}
do not change its runtime complexity. Rather, the improvement is in
the constant. Specifically, in Algorithm~\ref{alg:DC-FFT-M-N} we
compute the $n$-dimensional inverse DFT $\uu=D^{-1}\left(\vv_{i}\odot\vv_{j}\right)$ (line 12),
as well as the $2n$-dimensional DFT of the padded $\uu$ (line 14).
Compare that with applying the same $n$-dimensional inverse DFT $\uu=D^{-1}\left(\vv_{i}\odot\vv_{j}\right)$ in Algorithm~\ref{alg:more-efficient}
(line 13), as well as the $n$-dimensional (rather than $2n$-dimensional) DFT of $\uu\odot\omgg$ (line 14).
The former requires about 50\% more work, hence ShiftConvolve's improvement offers a reduction
of about 50\% in the runtime for large $N$.
% compute the $2n$-dimensional DFTs $\vv^{*}_{i}=D\uu_{i}$ and $\vv^{*}_{j}=D\uu_{j}$ (lines 3 and 14),
% where $n$ is the unpadded length of $\uu_{i}$ and $\uu_{j}$, as well as the
% $2n$ dimensional inverse DFT $\uu=D^{-1}\left(\vv_{i}\odot\vv_{j}\right)$ (lines 12 and 18).
% Compare that with applying two $n$-dimensional DFTs to recover just
% the odd entries of the $\ww_{i}$ and $\ww_{j}$ in addition again
% to the $2n$-dimensional $D^{-1}\left(\ww_{i}\odot\ww_{j}\right)$
% we compute in Algorithm~\ref{alg:more-efficient}. The former requires
% about 50\% more work, hence ShitfConvolve's improvement offers a reduction
% of about 50\% in the runtime for large $N$.

ShiftConvolve differs from DC-FFT with its default setting of $M=N$ by
employing the exponential shift and its different approach
to FFT-based convolution outlined in Algorithm \ref{alg:more-efficient}.
While the two algorithms share the same runtime complexity we found that in practice
ShiftConvolve is typically significantly faster (Supplementary Figure \ref{fig:runtime-Biscarri}).
In fact, we observed that ShiftConvolve speed is comparable to 
DC-FFT's when using the above formula of $M=\max\left\{ 2,2^{\left[\log_{2}\left(N/750\right)\right]}\right\}$
(Supplementary Figure \ref{fig:runtime-Biscarri}).

Generally, the version of ShiftConvolve that uses minFFT is slightly faster than the version the uses FFTW,
hence we used the former version in our runtime benchmarks. All timing runs were executed on a 3.2GHz
Intel Core i7 MacMini with 32GB of RAM.

\section{Discussion}

Madsen et al.~pointed out that DFT-CF fails to accurately recover
small tail probabilities of the PBD resulting in extremely large relative
errors. As we show here, the same applies to DC-FFT, which is also an FFT-based
exact algorithm for computing the PBD.

As an alternative Madsen et al.~offer a saddle point approximation method (SA), which
does a significantly better job at recovering these small tail probabilities.
However, SA has its own accuracy issues as we approach the maximal
possible value of $N$, as well as when the right tail probability
is rather large. The latter is particularly troubling because the
typical user will not be aware that the small p-value that SA reports
is in fact very close to 1 (Figure \ref{fig:SP-problem}).

Following \cite{Keich2005,Keich2006,Nagarajan2009} our proposed
solution to this problem combines the same exponential shift (sometime
referred to as an exponential tilt) that the saddlepoint approximation
is based on with the FFT-based exact method.
Specifically, ShiftConvolvePoibin uses an exponential
shift combined with a souped-up version of the FFT-based convolution by aggregated
pairs approach as implemented in DC-FFT ($M=N$). The result is a
relatively fast exact algorithm that accurately computes tail probabilities across the entire
range of the PBD.

In terms of runtime complexity at $O\left(N\left(\log N\right)^{2}\right)$
ShiftConvolve is equivalent to DC-FFT ($M=N$), however both DFT-CF
as well as DC-FFT with its default setting of $M=2$ have a runtime
complexity of $O\left(N^{2}\right)$ which makes a significant difference
for large $N$.

It is worth noting that Biscarri et al.~recommend using the refined
normal approximation (RNA) for $N\ge10^{5}$, however Supplementary
Figure \ref{fig:RNA-problem} shows that as far as computing small
tail probabilities SA does a significantly better job than RNA and
both are inferior when compared with the accuracy of ShiftConvolve.

In terms of future research, while in practice ShiftConvolve is accurate
throughout the entire range of possible values it would be useful
to obtain upper bounds on its cumulative numerical error which can
then be compared with the computed result to guarantee its accuracy
(cf.~\cite{Wilson2017}). A related point is that ShiftConvolve
is designed to accurately recover the relevant section of the pmf
(panel B of Figure \ref{fig:DFT_error}) however it is worth noting
that a few well-selected shifts $\tht_{0}$ should allow us to accurately
recover the entire pmf. Regardless, the ShiftConvolvePoibin package
allows the user to compute the entire PBD while bypassing the
exponential shift. As such it is competitive with the fastest exact methods in this category.

The ShiftConvolvePoibin package is available to download from\\
\url{https://github.com/andrew12678/ShiftConvolve}
% https://github.com/andrew12678/ShiftConvolveFFTW

\section{Acknowledgements}

UK would like to thank Jakob Pedersen for initial discussions on this
problem as well as to Denzel Florez, Tingyue Liu, Xuanchi Liu, and
Buqing Yang for initial work on this project.

\newpage

\section{Supplementary Figures}

\begin{figure}
\centering %
\begin{tabular}{ll}
A. & B.\tabularnewline
\includegraphics[width=3in]{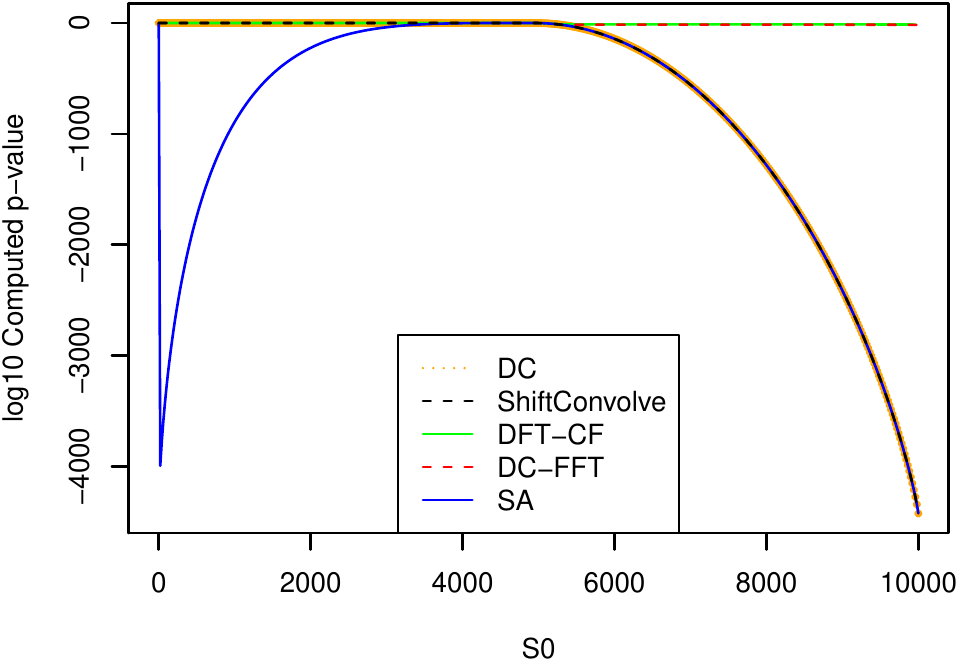} & \includegraphics[width=3in]{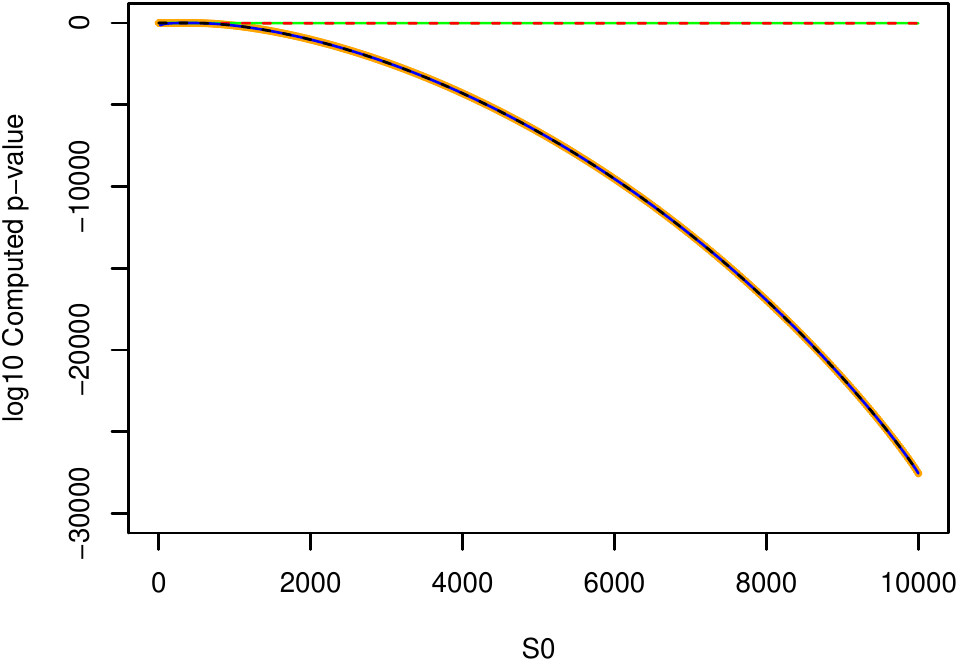}\tabularnewline
C. & D.\tabularnewline
\includegraphics[width=3in]{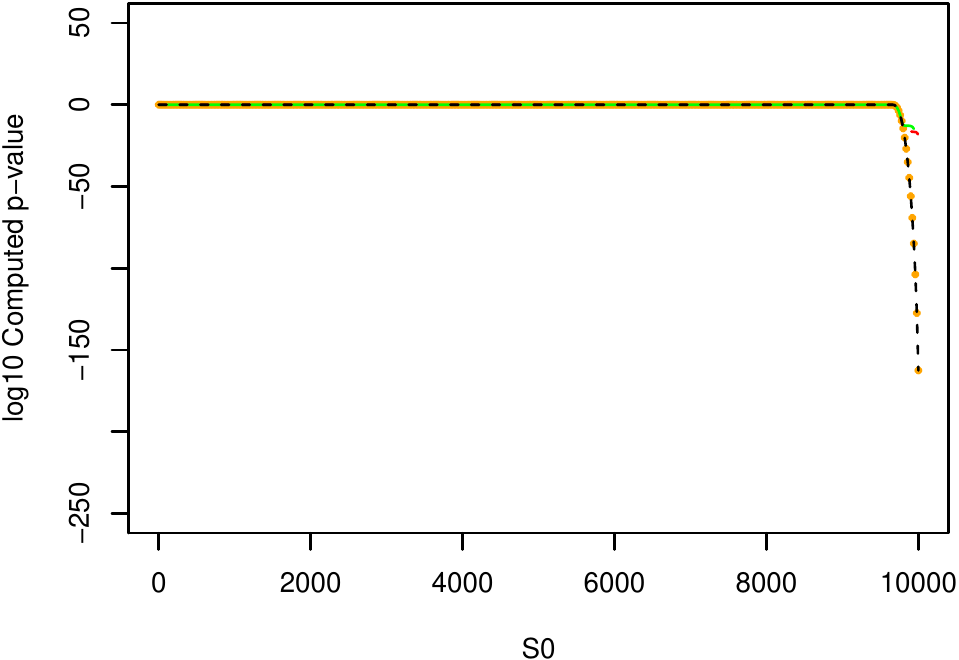} & \includegraphics[width=3in]{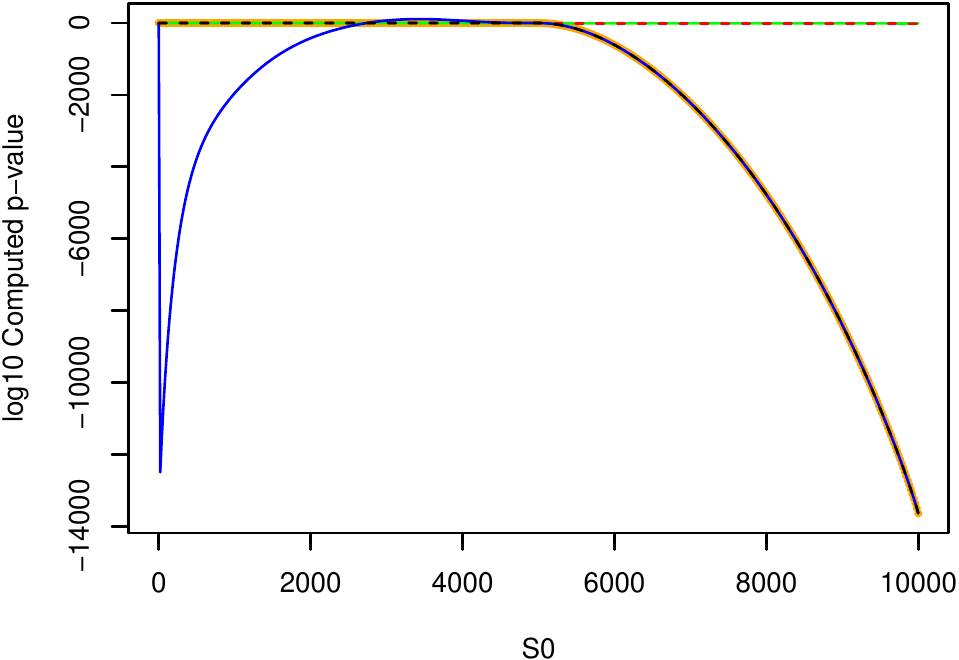}\tabularnewline
E. SA gets as high as 400 & F.\tabularnewline
\includegraphics[width=3in]{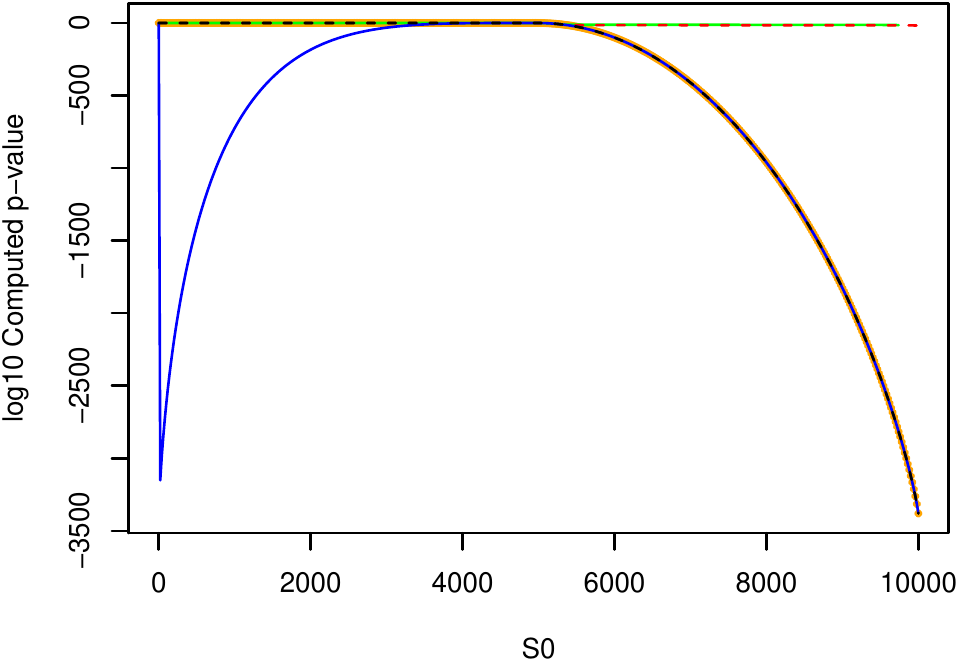} & \includegraphics[width=3in]{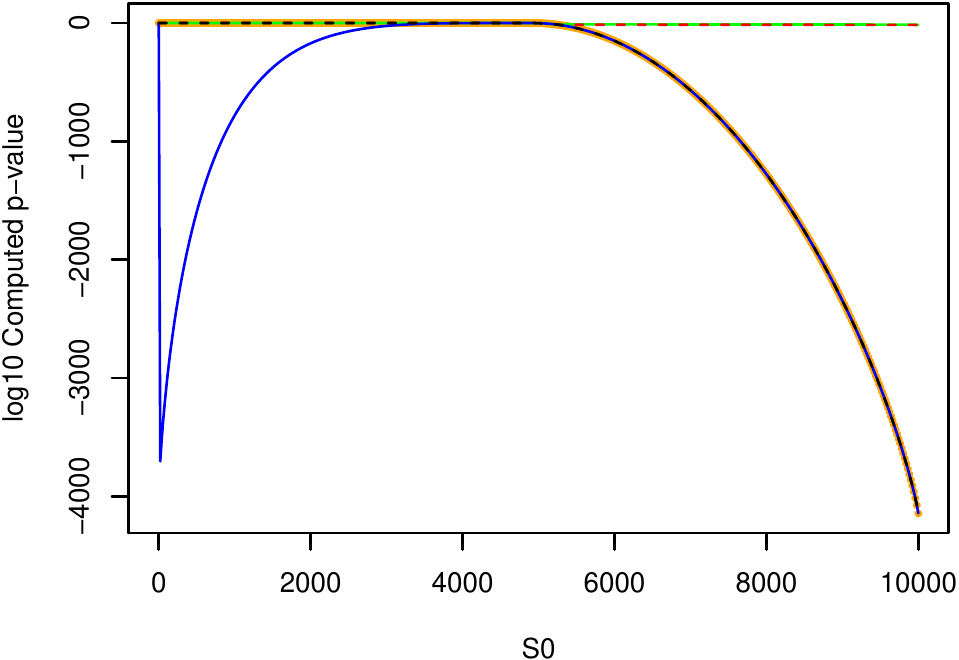}\tabularnewline
\end{tabular}\caption{\textbf{Right tail probability (II).} Similarly to Figure \ref{fig:right-tails}
the panels show the (log of the) computed right tail probability (p-value)
for additional settings of the parameters. For the corresponding relative
errors see Supplementary Figure \ref{fig:right-tail-error-2}. (A) $N = 10^4$, $p_i \sim U(0,1)$.
(B) $N = 10^4$, $p_i \sim \bd(0.1,3)$. (C) $N = 10^4$, $p_i \sim \bd(3,0.1)$.
Note that in this example with higher proportions of 1s present (due to roundoff errors)
the SA algorithm failed to run, returning an error where the blue curve is absent.
(D) $N = 10^4$, $p_i \sim \left[0.5\cdot\bd(3,0.1) + 0.5 \cdot\bd(0.1,3)\right]$ (E) $N = 10^4$, $p_i \sim \bd(3,3)$.
(F) $N = 10^4$, $p_i \sim \left[0.5\cdot\bd(3,10) + 0.5\cdot\bd(10,3)\right]$. \label{fig:right-tail-2}}
\end{figure}

\begin{figure}
\centering %
\begin{tabular}{ll}
A. & B.\tabularnewline
\includegraphics[width=3in]{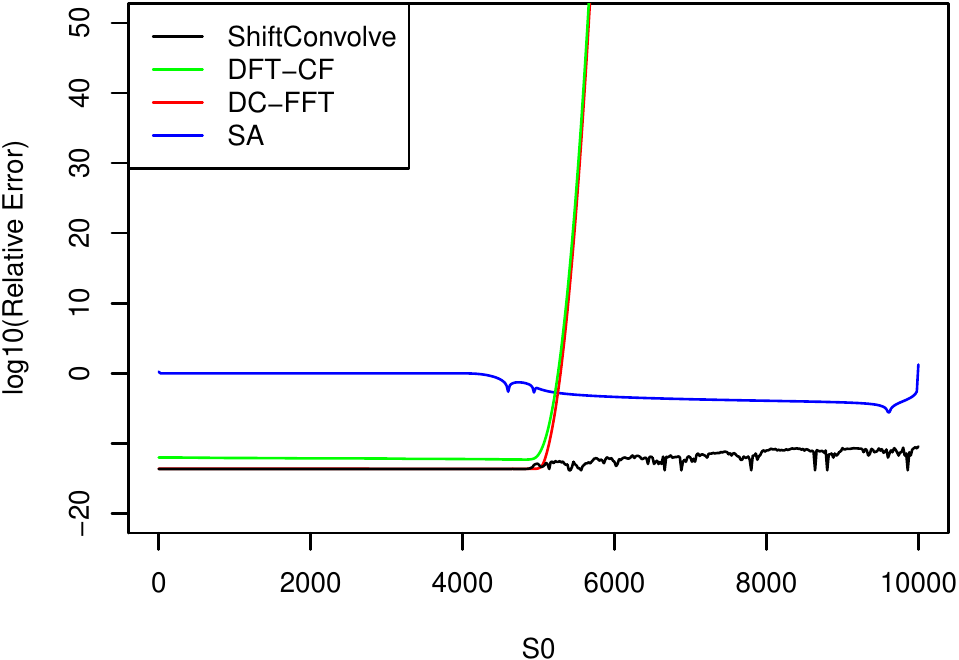} & \includegraphics[width=3in]{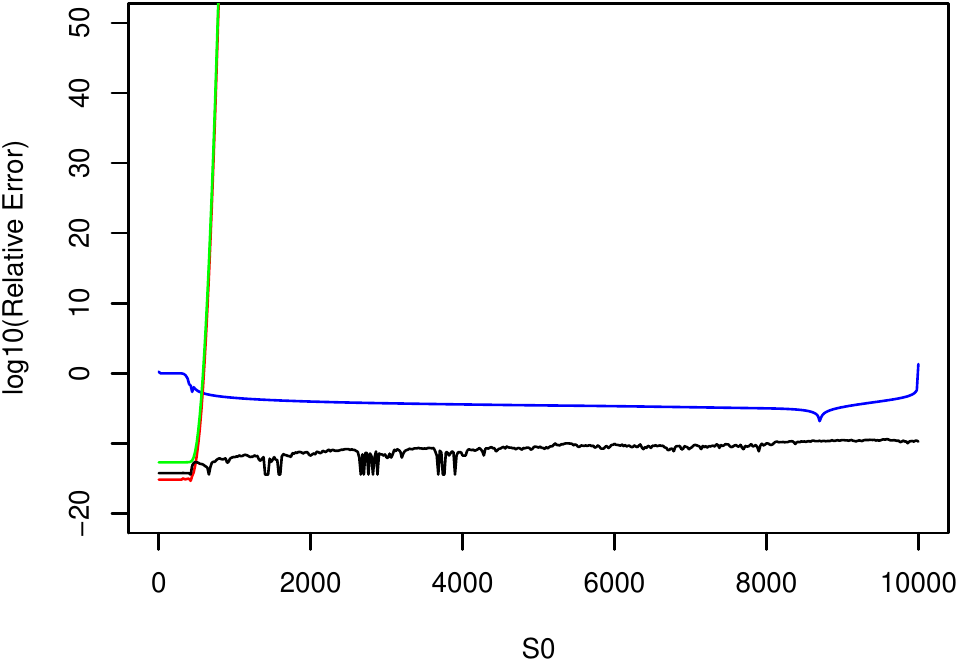}\tabularnewline
C. & D.\tabularnewline
\includegraphics[width=3in]{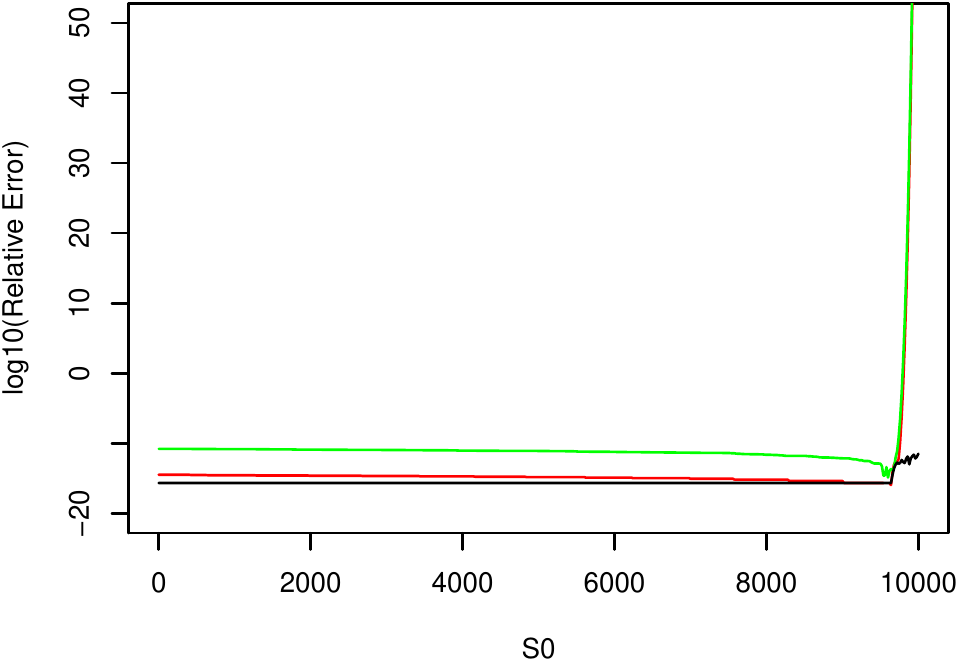} & \includegraphics[width=3in]{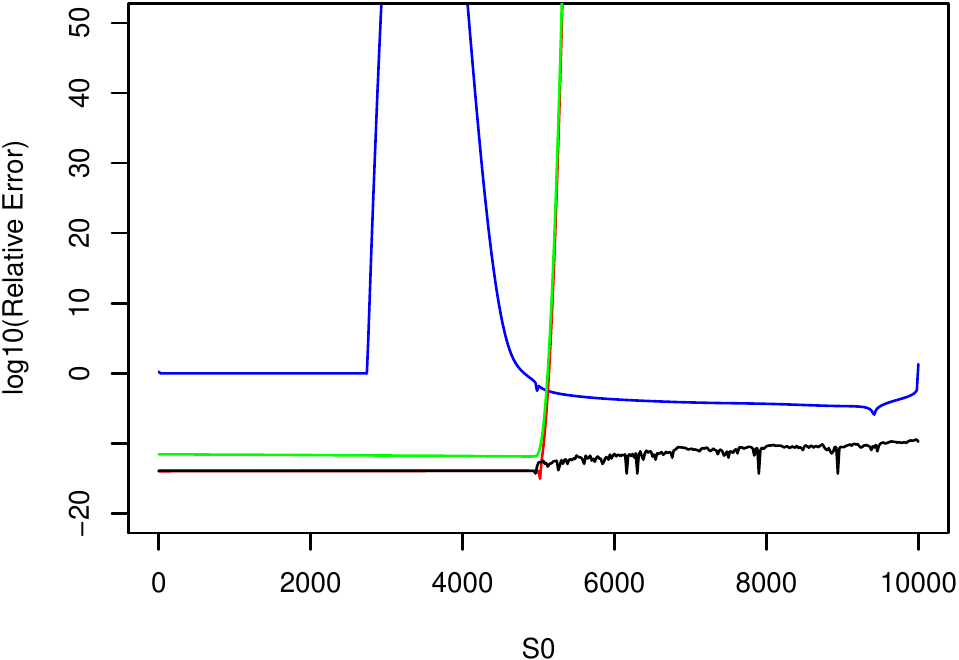}\tabularnewline
E. & F.\tabularnewline
\includegraphics[width=3in]{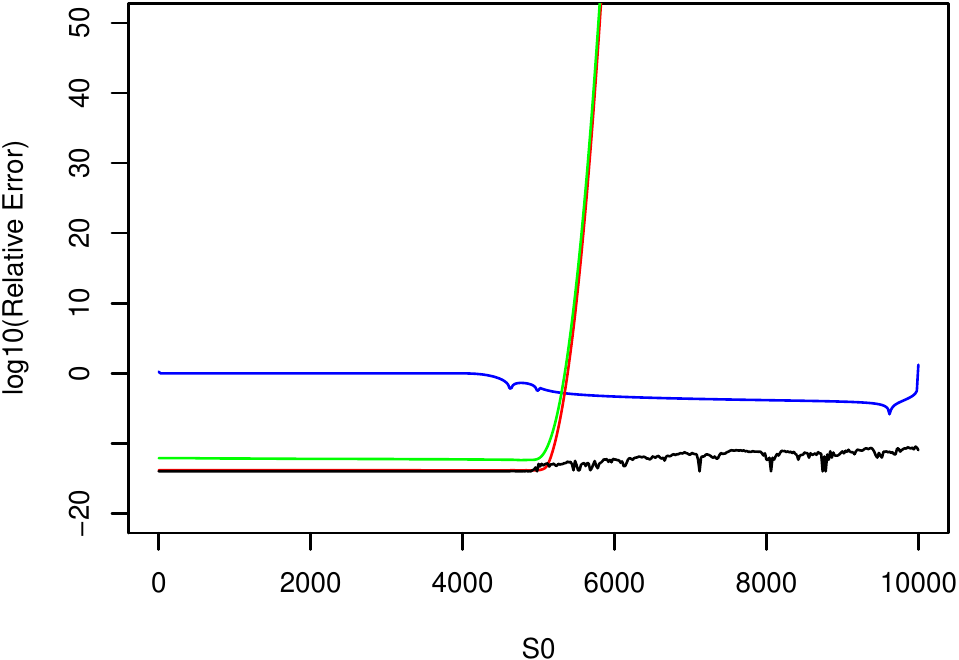} & \includegraphics[width=3in]{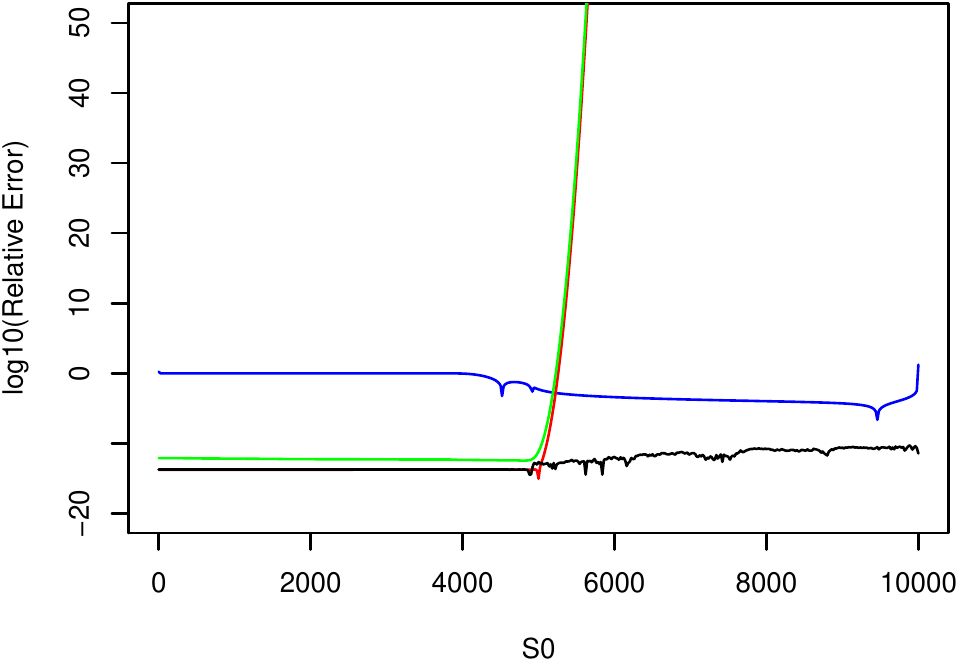}\tabularnewline
\end{tabular}\caption{\textbf{Gauging the accuracy.} Similarly to Figure \ref{fig:right-tail-error}
the panels show the (log of the) relative error in computing the right
tail\textbf{ }for varying settings of the parameters. See Figure \ref{fig:right-tail-2}
for the right tail probability itself. (A) $N = 10^4$, $p_i \sim U(0,1)$.
(B) $N = 10^4$, $p_i \sim \bd(0.1,3)$. (C) $N = 10^4$, $p_i \sim \bd(3,0.1)$.
(D) $N = 10^4$, $p_i \sim \left[0.5\cdot\bd(3,0.1) + 0.5 \cdot\bd(0.1,3)\right]$ (E) $N = 10^4$, $p_i \sim \bd(3,3)$.
(F) $N = 10^4$, $p_i \sim \left[0.5\cdot\bd(3,10) + 0.5\cdot\bd(10,3)\right]$.
\label{fig:right-tail-error-2}}
\end{figure}

\begin{figure}

\begin{centering}
\includegraphics[width=6in]{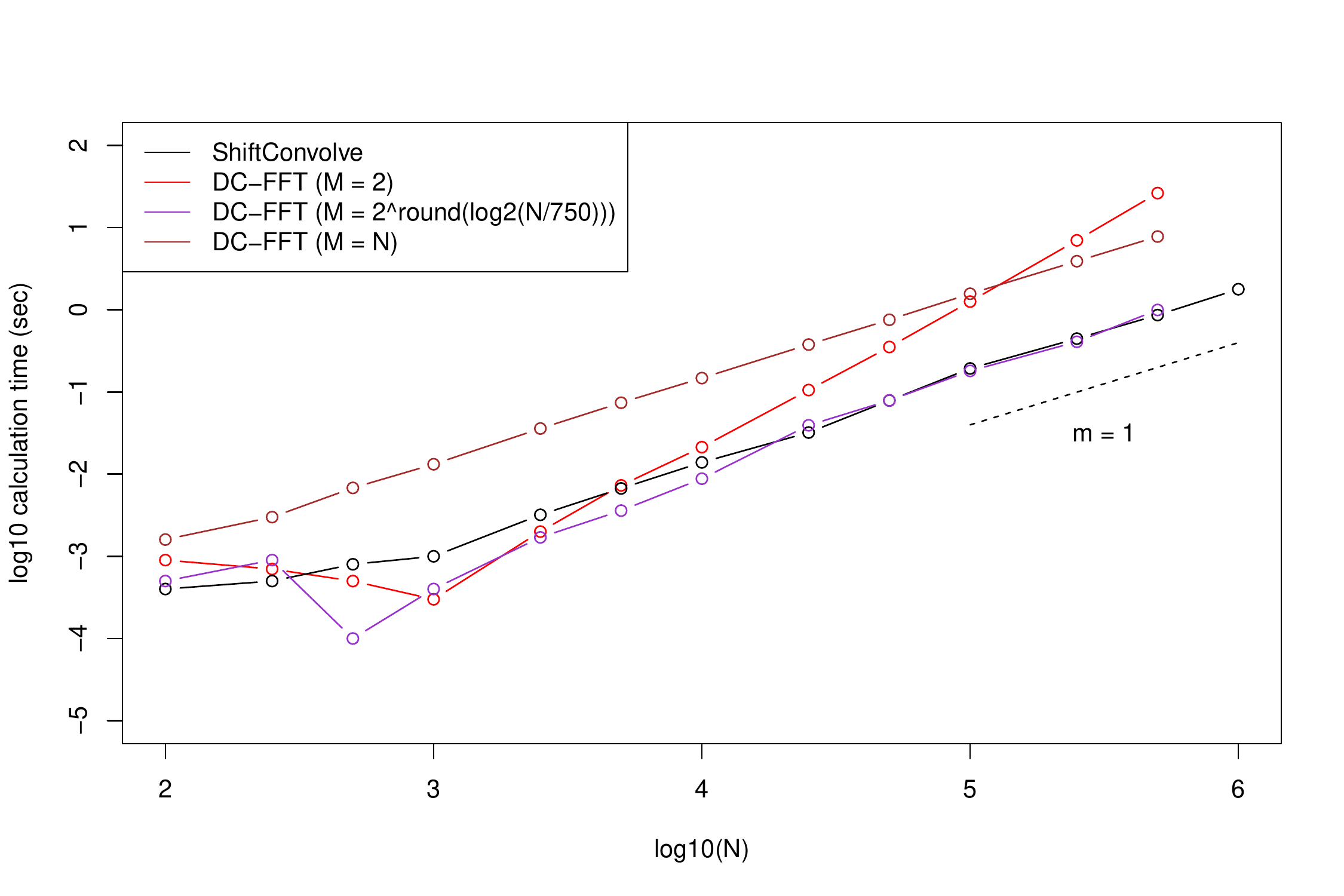}\caption{\textbf{Runtime: DC-FFT (varying $M$) and ShiftConvolve.} The runtime
of DC-FFT using different settings of the parameter $M$ and ShiftConvolve.
Unfortunately we could not get DC-FFT to work for $N\ge525\cdot10^{3}$.
A line segment of slope 1 was added for reference.\label{fig:runtime-Biscarri}}
\par\end{centering}
\end{figure}
\begin{figure}
\centering %
\begin{tabular}{ll}
A. & B.\tabularnewline
\includegraphics[width=3in]{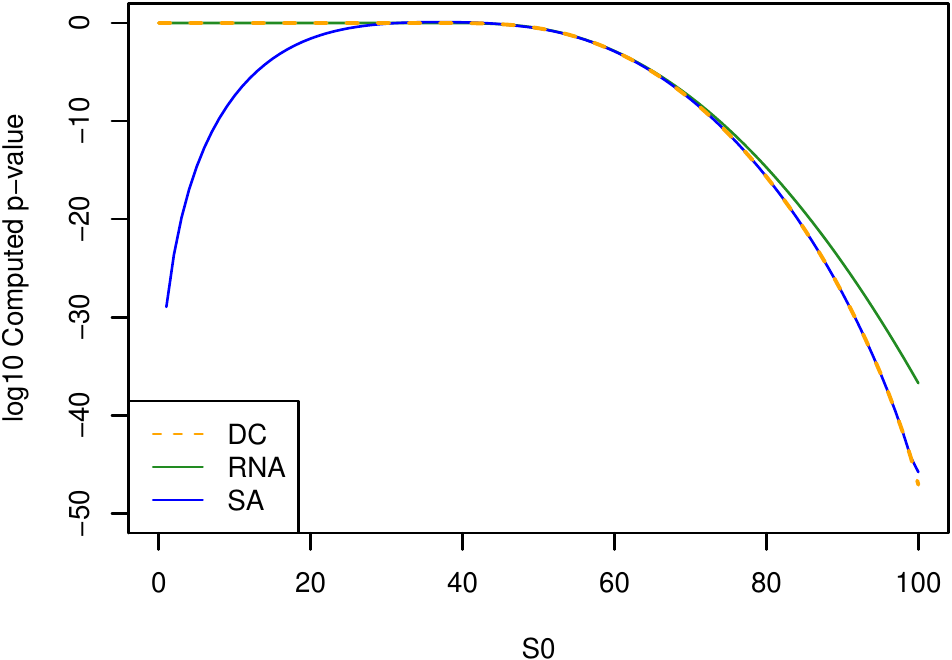} & \includegraphics[width=3in]{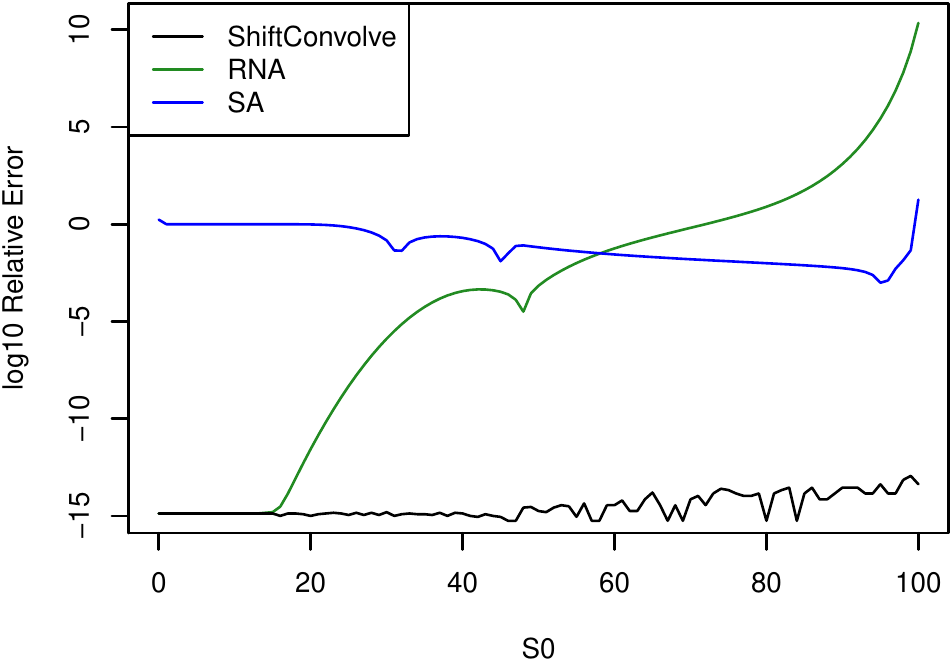}\tabularnewline
C. & D.\tabularnewline
\includegraphics[width=3in]{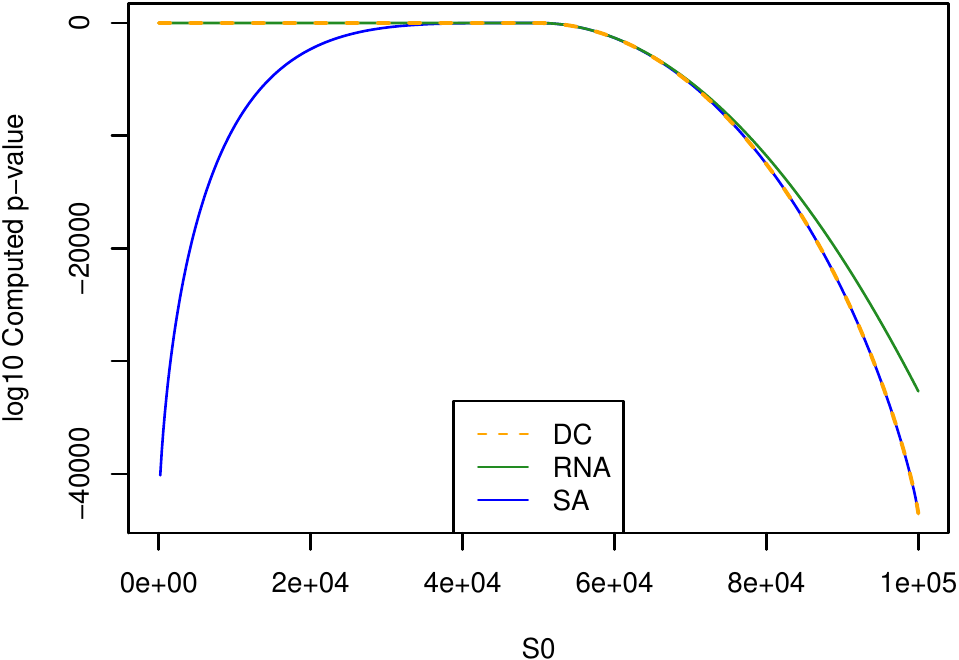} & \includegraphics[width=3in]{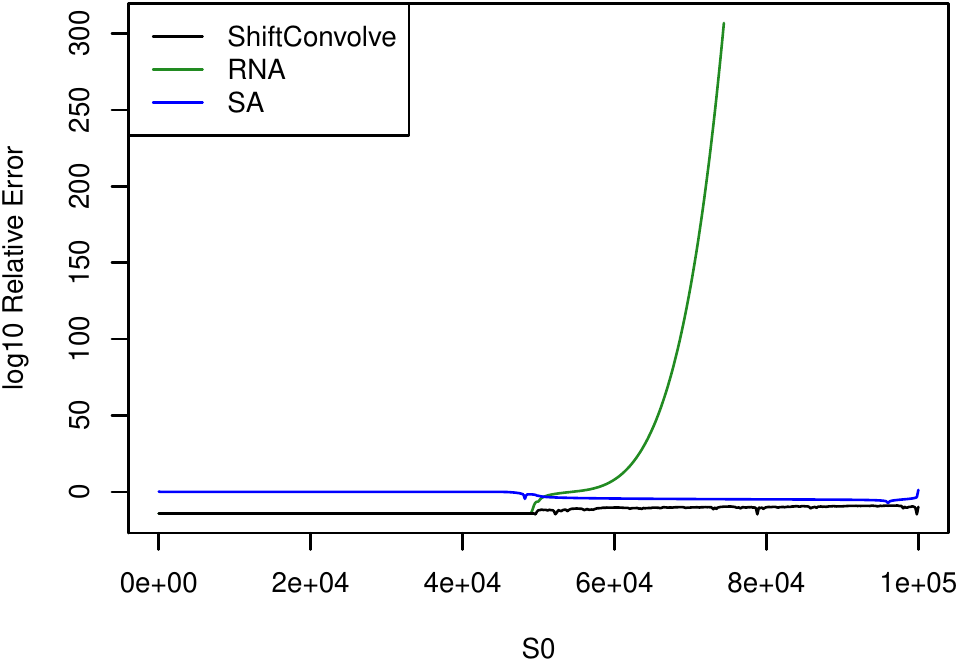}\tabularnewline
\end{tabular}\caption{\textbf{Inaccuracy of RNA.} The panels show examples where RNA fails
to accurately compute the right tail probability. (A) $N = 100$, $p_i$ sampled uniformly. The computed right tail probability
(p-value) reported by RNA compared with the accurate DC. (B) The corresponding
(log base 10 of) the relative error where we added for reference the
significantly smaller relative error of ShiftConvolve. (C) $N = 10^5$. $p_i$ sampled uniformly.
(D) The corresponding relative error. \label{fig:RNA-problem}}
\end{figure}

\newpage

\bibliographystyle{plain}
\bibliography{/Users/keich/References/journal_papers/stats/stat_papers,/Users/keich/References/journal_papers/papers}

\begin{thebibliography}{10}

\bibitem{Biscarri2018}
William Biscarri, Sihai~Dave Zhao, and Robert~J Brunner.
\newblock A simple and fast method for computing the poisson binomial
  distribution function.
\newblock {\em Computational Statistics \& Data Analysis}, 122:92--100, 2018.

\bibitem{Brisebarre2019}
Nicolas Brisebarre, Mioara Joldes, Jean-Michel Muller, Ana-Maria Nane{\c{s}},
  and Joris Picot.
\newblock Error analysis of some operations involved in the cooley-tukey fast
  fourier transform.
\newblock {\em ACM Transactions on Mathematical Software}, pages 1--34, 2019.

\bibitem{Cooley:1965}
J.W. Cooley and J.W. Tukey.
\newblock An algorithm for the machine calculation of complex {F}ourier series.
\newblock {\em Mathematics of Computation}, 19(90):297--301, 1965.

\bibitem{FFTW05}
Matteo Frigo and Steven~G. Johnson.
\newblock The design and implementation of {FFTW3}.
\newblock {\em Proceedings of the IEEE}, 93(2):216--231, 2005.
\newblock Special issue on ``Program Generation, Optimization, and Platform
  Adaptation''.

\bibitem{Hong2013}
Yili Hong.
\newblock On computing the distribution function for the poisson binomial
  distribution.
\newblock {\em Computational Statistics \& Data Analysis}, 59:41--51, 2013.

\bibitem{Keich2006}
U.~Keich and N.~Nagarajan.
\newblock A fast and numerically robust method for exact multinomial
  goodness-of-fit test.
\newblock {\em Journal of Computational and Graphical Statistics},
  15(4):779--802, 2006.

\bibitem{Keich2005}
Uri Keich.
\newblock s{FFT}: a faster accurate computation of the {$p$}-value of the
  entropy score.
\newblock {\em J Comput Biol}, 12(4):416--30, May 2005.

\bibitem{Madsen2017}
Tobias Madsen, Asger Hobolth, Jens~Ledet Jensen, and Jakob~Skou Pedersen.
\newblock Significance evaluation in factor graphs.
\newblock {\em BMC bioinformatics}, 18:199, March 2017.

\bibitem{Mukhin2019}
Alexander Mukhin.
\newblock {\em minFFT: A minimalist Fast Fourier Transform library.}, 2019.

\bibitem{Nagarajan2009}
Niranjan Nagarajan and Uri Keich.
\newblock Reliability and efficiency of algorithms for computing the
  significance of the mann--whitney test.
\newblock {\em Computational Statistics}, 24(4):605, 2009.

\bibitem{Schatzman1996}
JC. Schatzman.
\newblock Accuracy of the discrete {F}ourier transform and the fast {F}ourier
  transform.
\newblock {\em SIAM J. Sci. Comput.}, 17(5):1150--1166, 1996.

\bibitem{Stockham1966}
T.G. Stockham, Jr.
\newblock High-speed convolution and correlation.
\newblock In {\em Proceedings of the April 26-28, 1966, Spring Joint Computer
  Conference}, AFIPS '66 (Spring), pages 229--233, New York, NY, 1966. ACM.

\bibitem{Wilson2016b}
Huon Wilson and Uri Keich.
\newblock Accurate pairwise convolutions of non-negative vectors via fft.
\newblock {\em Computational Statistics \& Data Analysis}, 101:300--315, 2016.

\bibitem{Wilson2017}
Huon Wilson and Uri Keich.
\newblock Accurate small tail probabilities of sums of iid lattice-valued
  random variables via fft.
\newblock {\em Journal of Computational and Graphical Statistics},
  26(1):223--229, 2017.

\end{thebibliography}

\end{document}